\newtheorem{proposition}{Proposition}
\renewcommand{\phi}{\varphi}
\newcommand{\blue}[1]{{#1}}
\newcommand{\N}{\mathbb{N}}
\newcommand{\Z}{\mathbb{Z}}
\newcommand{\Q}{\mathbb{Q}}
\newcommand{\ignore}[1]{}
\DeclareMathOperator{\End}{End}
\DeclareMathOperator{\Csp}{CSP}
\newcommand\tuple[1]{\mathbf{#1}}
\newcommand\mA{\mathbb A}
\newcommand\mB{\mathbb B}
\title{The complexity of disjunctive linear Diophantine constraints}
\author{Manuel Bodirsky}{Institut f\"ur Algebra, TU Dresden, Germany}
{manuel.bodirsky@tu-dresden.de}
{}
{Manuel Bodirsky has received funding from the ERC under the European Community's Seventh Framework Programme (Grant Agreement no. 681988, CSP-Infinity), and the DFG-funded project `Homogene Strukturen, Bedingungserf\"ullungsprobleme, und topologische Klone' (Project number 622397)} 
\author{Barnaby Martin}{Department of Computer Science, Durham University, U.K.}{barnabymartin@gmail.com}{}{}
\author{Marcello Mamino}{Dipartimento di Matematica, largo Pontecorvo 5, 56127 Pisa, Italy}
{marcello.mamino@dm.unipi.it}
{}
{Marcello Mamino has received funding from the ERC under the European Community's Seventh Framework Programme (Grant Agreement no. 681988, CSP-Infinity).}
\author{Antoine Mottet}{Institut f\"ur Algebra, TU Dresden, Germany}{antoine.mottet@tu-dresden.de}{}{Supported by the DFG Gratuiertenkolleg 1763 (QuantLA).}
\authorrunning{M. Bodirsky et al.}
\subjclass{F.2.2 Nonnumerical Algorithms and Problems}
\keywords{Constraint Satisfaction, Presburger Arithmetic, Computational Complexity}
\begin{document}

\maketitle

\begin{abstract}
We study the Constraint Satisfaction Problem CSP(\mbox{$\mA$}), where $\mA$ is first-order definable in $(\Z;+,1)$ and contains $+$. We prove such problems are either in P or NP-complete.
\end{abstract}

\section{Introduction}

A \emph{constraint satisfaction problem} (CSP) is a computational problem where the input consists of a finite set of variables and a finite set of constraints, and where the question is whether there exists a mapping from the variables to 
some fixed domain such that all the constraints are satisfied. 
When the domain is finite, and arbitrary constraints are permitted in the input, the CSP is NP-complete. 
However, when only constraints for a restricted set of relations are allowed in the input, it might be possible to solve the CSP in polynomial time. 
The set of relations that is allowed to formulate the constraints in the input is often called the \emph{constraint language}. The question as to which constraint
languages give rise to polynomial-time solvable CSPs
has been the topic of intensive research over the past years. It was conjectured by Feder and Vardi~\cite{FederVardi} that CSPs for constraint languages over finite domains have a complexity dichotomy: they are in P or are NP-complete. This conjecture has recently been proved \cite{BulatovFVConjecture,ZhukFVConjecture}.

A famous CSP over an infinite domain is the feasibility question for Integer Programs.
It is of great importance in practice and theory of computing, and NP-complete. 
In order to obtain a systematic understanding
of polynomial-time solvable restrictions and variations of this problem, 
Jonsson and L\"o\"ow~\cite{JonssonLoow}  proposed to study the class of
CSPs where the constraint language $\mA$ 
is definable in 
\emph{Presburger arithmetic}; that is,
consists
of relations that have a first-order definition over $({\mathbb Z};<,+,1)$.
Equivalently, each relation $R(x_1,\dots,x_n)$ in $\mA$ can be
defined by a disjunction of conjunctions of the atomic formulas of the form $p \leq 0$ where
 $p$ is a linear polynomial with integer coefficients and  variables from $\{x_1,\dots,x_n\}$. 
The constraint satisfaction problem for $\mA$, denoted by $\Csp(\mA)$, is the problem of deciding whether a
given conjunction of formulas of the form $R(y_1,\dots,y_n)$, for some $n$-ary $R$ from $\mA$, is satisfiable in $\mA$. 
By appropriately choosing such a constraint language $\mA$, a great
variety of problems over the integers can be formulated as $\Csp(\mA)$.
Several constraint languages $\mA$ over the integers are known where the CSP can be solved in polynomial time. Among the most famous of these is Linear Diophantine Equations, namely $\Csp(\Z;+,1)$.
The first polynomial-time algorithms for the satisfiability of linear Diophantine equation systems have been discovered by Frumkin and, independently, Sieveking and von zur Gathen.  Kannan and Bachem~\cite{KannanBachem} presented a method based on first computing the Hermite Normal Form of the matrix given by the linear system (see discussion in the text-book of Schrijver~\cite{Schrijver}). Further improvements have been made in~\cite{GaussElimPoly,Storjohann,MW2001}.
In the present parlance, $\Csp(\Z;<,+,1)$ is Integer Program feasibility itself. 
However, a complete complexity classification for the CSPs of Jonsson-L\"o\"ow languages appears to be a very ambitious goal. 

Among the classes of constraint language that fall into the framework of Jonsson and L\"o\"ow
are the \emph{distance CSPs} of \cite{BodDalMarMotPin,BodMarMot} and the \emph{temporal CSPs} of \cite{tcsps-journal}. Temporal CSPs are those whose constraint language is  first-order definable in $(\Q;<)$ and \emph{discrete temporal CSPs} are those whose constraint languages is first-order definable in $(\Z;<)$. The classification for discrete temporal CSPs represents the join of the work on temporal CSPs and distance CSPs, and has only recently been accomplished \cite{dCSPs3}.

Moving away from the discrete and non-dense, $(\Q;<)$ is not the only structure for which constraint languages that are first-order expansions have had their CSPs classified. The situation for such expansions of the language of linear programming, $(\Q;<,+,1)$ was settled in \cite{Essentially-convex}. Perhaps, more interesting for us is the simplified situation in which only first-order expansions of $(\Q;+)$ are considered, in \cite{HornOrFull}. Most recently, the work \cite{JonssonThapperJCSS16} delivers a classification for all first-order definitions in $(\Q;<,+,1)$ that contain $+$, thus properly extending the result from \cite{Essentially-convex}. In these works, the class of relations quantifier-free definable in Horn CNF plays a key role. In this context, the atomic relations are inequalities and equalities, and each clause may have no more than one equality or inequality. That is, additional disjuncts in clauses must be disequalities. For first-order expansions of $(\Q;+)$, the tractable constraint languages are precisely those that are quantifier-free Horn definable on $(\Q;+)$ \cite{HornOrFull}.

However, the integers behave very differently from the rationals or reals and even simple types of Horn definitions engender intractable constraint languages, as documented in \cite{JonssonLoow}. This article shows, depending on one's perspective, [un]surprisingly, that the tractability frontier for first-order definitions of $(\Z;+,1)$, containing $+$, coincides with that for first-order expansions of $(\Q;+)$.
Under a mild technical assumption on $\mA$,
either all of its relations are quantifier-free Horn definable, in the expansion of $(\Z;+,1)$ associated with its quantifier elimination, and CSP$(\mA)$ is solvable in P; or CSP$(\mA)$ is NP-complete.
From this we obtain the following dichotomy result. 

\begin{theorem}\label{thm:dichotomy}
Let $\mA$ be an expansion of $(\Z;+)$ by \blue{finitely many} relations
with a first-order definition in $(\Z;+,1)$. Then
$\Csp(\mA)$ is in P or NP-complete. 
\end{theorem}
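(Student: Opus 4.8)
The plan is to prove the finer dichotomy sketched just above --- under the stated mild hypothesis, either every relation of $\mathbb{A}$ is quantifier-free Horn-definable in the expansion $\mathbb{B}$ of $(\mathbb{Z};+,1)$ witnessing its quantifier elimination, and then $\Csp(\mathbb{A})\in\cP$, or $\Csp(\mathbb{A})$ is $\cNP$-complete --- and to derive Theorem~\ref{thm:dichotomy} from it. I would begin by recalling that $(\mathbb{Z};+,1)$ admits quantifier elimination once one adjoins $-$, the constant $0$, and the unary congruence predicates $x\equiv_m r$ for $m\ge 2$, $0\le r<m$; this is standard, following from the Baur--Monk elimination down to Boolean combinations of positive primitive formulas in modules, together with Smith normal form, the lone constant being harmless. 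Hence each relation of $\mathbb{A}$ is, after quantifier elimination, a Boolean combination of linear equations $\sum a_i x_i = b$ and congruences $\sum a_i x_i\equiv_m b$; writing this in conjunctive normal form, ``Horn'' means that in each clause at most one disjunct is a positive atom, every other disjunct being a disequation $\ne$ or an anti-congruence $\not\equiv_m$.

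Next I would carry out the reduction of Theorem~\ref{thm:dichotomy} to the case covered by the body's hypothesis. Since $\mathbb{A}$ expands $(\mathbb{Z};+)$, every endomorphism of $\mathbb{A}$ has the form $x\mapsto cx$. If $x\mapsto 0$ is an endomorphism (equivalently, every nonempty relation of $\mathbb{A}$ contains the all-zero tuple), then $\Csp(\mathbb{A})$ is trivially in $\cP$, the all-zero assignment solving every instance. Otherwise I expect one can reduce to the situation where $\mathbb{A}$ primitive-positively defines a nonzero constant: since primitive positive definitions yield polynomial-time reductions, $\Csp(\mathbb{A})$ is then polynomial-time equivalent to $\Csp(\mathbb{A}\cup\{\{1\}\})$, which satisfies the hypothesis; and in the residual case the relations of $\mathbb{A}$ are already definable in $(\mathbb{Z};+)$, where the same argument applies.

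For the tractable direction I would give a propagation algorithm in the spirit of the one for $(\mathbb{Q};+)$ in~\cite{HornOrFull}. Maintain the set $S$ of solutions forced so far as a coset of a subgroup of $\mathbb{Z}^n$, represented by a system of linear equations and congruences with polynomially bounded coefficients (Hermite/Smith normal form). Repeatedly scan the Horn clauses of the expanded instance, and whenever all negative literals of a clause are falsified on $S$ --- i.e.\ the complementary equations and congruences are implied by $S$ --- add the clause's unique positive atom to $S$; each such step strictly shrinks $S$ along a chain whose length is polynomial in the bit-size of the instance, so a fixpoint is reached after polynomially many steps. Finally, accept iff the remaining purely negative clauses can be simultaneously satisfied on $S$: this is the problem of finding a point of an affine lattice avoiding finitely many proper affine sublattices, solvable in polynomial time (a generic point works when $S$ is infinite, using the Chinese Remainder Theorem to handle the modular part; when $S$ is a singleton one simply tests it).

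Finally, the hard direction, which I expect to be the main obstacle. Membership in $\cNP$ is routine: unfolding the definitions turns an instance into an existential Presburger sentence, whose satisfiability over $\mathbb{Z}$ lies in $\cNP$ by the standard polynomial bounds on smallest solutions of systems of integer linear (in)equalities. For $\cNP$-hardness, assume some relation $R$ of $\mathbb{A}$ is not Horn-definable over $\mathbb{B}$. The crux is a combinatorial lemma to the effect that from such an $R$, together with $+$, the available constants, and the congruences, one can primitive-positively define a genuinely hard relation --- for instance $\{(x,y)\in\mathbb{Z}^2: x=0\vee y=0\}$, which with $+$ and the constants $0,1$ encodes $3$-colourability (force triples $(a,b,c)$ with $a+b+c=1$ and all pairwise products zero to range over the permutations of $(1,0,0)$, obtaining ``variables over a $3$-element set'', then forbid adjacent such variables from agreeing coordinatewise), or, when the failure of Horn-definability is purely modular, a relation of the shape $\{x: x\equiv 0\pmod p\vee x\equiv 0\pmod q\}$ with $\gcd(p,q)=1$, from which an $\cNP$-hard problem is built analogously via the Chinese Remainder Theorem. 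Proving this lemma is the bulk of the work: one must argue that non-Horn-ness of $R$ forces its normal form to contain a clause with two positive literals that is essential --- surviving the merging of literals implied modulo the others, the deletion of redundant disjuncts, the specialisation of variables to constants, and quotienting by subgroups --- and then read off the required two-variable (or one-variable) either/or gadget, carrying the analysis out uniformly over both the linear and the modular parts of the formula, and verifying that the resulting primitive positive constructions really do reduce a known $\cNP$-hard problem to $\Csp(\mathbb{A})$.
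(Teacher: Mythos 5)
Your tractability half and the NP-membership remark are essentially the paper's (unit-clause propagation over a coset representation, backed by a polynomial-time Diophantine solver and Gaussian elimination, with a generic-point argument for the leftover disequalities), so no complaint there. The gap is in the other two components. First, your reduction of the general statement to the ``good'' case is wrong as stated: you claim that if $x\mapsto 0$ is not an endomorphism then either a nonzero constant is pp-definable or all relations are already definable in $(\Z;+)$. Take $\mA=(\Z;+,K)$ with $K=1+3\Z$ (or the paper's example $(\Z;+,R,K)$ with $R=\{0\}\cup(1+3\Z)\cup(2+3\Z)$): this is a core with $\End(\mA)=1+3\Z$, so $x\mapsto 4x$ is an endomorphism and no nonzero singleton is pp-definable, yet $K$ is not first-order definable in $(\Z;+)$ because it is not preserved by $x\mapsto -x$. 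The paper's substitute for your missing constant is the core construction of Lemma~\ref{lem:core} (dividing defining formulas by endomorphisms until a fixpoint), the analysis of $\End(\mA)$ in Lemma~\ref{lem:pp-def-endos} and Corollary~\ref{cor:structure-endos}, and, crucially, the device of uniformly pp-definable \emph{compatible families} $\{S_\lambda\}_{\lambda\in\End(\mA)}$, which lets one run unary hardness arguments ``parametrically in $\lambda$'' precisely when no constant is available. Nothing in your proposal plays that role.

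Second, the hardness direction, which you yourself call the bulk of the work, is left as an unproven ``combinatorial lemma'', and the specific form you conjecture for it is doubtful: it is not established (and not what the paper proves) that an arbitrary non-Horn relation lets you pp-define $\{(x,y)\in\Z^2 : x=0\lor y=0\}$ or a relation $x\equiv 0\bmod p\lor x\equiv 0\bmod q$; indeed for a relation such as $\{0\}\cup(1+3\Z)\cup(2+3\Z)$ neither target is in evident reach by pp-definitions, and the whole difficulty is that the failure of Horn-definability may mix a finite exceptional part with a modular part and may only be exploitable after passing to the parametrized family. The paper's actual route is different: Lemma~\ref{lem:step-1} massages a minimal non-Horn definition so that a non-Horn clause contains a genuine linear equation; Theorem~\ref{thm:not-Horn-hard} then cuts out a compatible family of non-Horn unary relations along the line through two carefully chosen witnesses $\tuple a,\tuple b$; Proposition~\ref{prop:unary-hard} splits into the fully modular case, handled by Proposition~\ref{prop:hardness-fully-modular} via Theorem~34 of Feder--Vardi on the general subgroup problem, and the case of a finite non-singleton part, handled by a reduction from \textsf{1-in-3-SAT} (Lemma~\ref{lem:finite-nontrivial-hard}). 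Without a proof of your lemma, and with the reduction step above being false, the proposal does not establish the theorem.
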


\subsection*{Other related work}
This work forms part of a growing body addressing infinite-domain CSPs. One line of that work concerns $\omega$-categorical and finitely-bounded constraint languages and the other line considers constraint languages over ordinary structures of arithmetic. The two lines overlap in the foundational work on temporal CSPs \cite{tcsps-journal}. The outstanding other result in the first line is \cite{BodPin-Schaefer} and recent progress can be seen in \cite{BartoP16,BartoKOPP17}. The importance of the latter line is discussed in the survey \cite{NumericCSPs}.

The CSP for certain finite groups were studied already in the seminal \cite{FederVardi}. $(\Z;+,0)$ is a group \emph{par excellence} and our work takes inspiration from that paper. One of our hardness results uses its Theorem 34 and our tractable cases include the situation when all relations are subgroups, or cosets of subgroups, of powers of $\Z$ (\mbox{cf.} \cite{FederVardi}, Theorem 33). However, not all first-order expansions of $(\Z;+,1)$ are related to groups, and we have other sources of tractability too. 

\section{Preliminaries}
We say a relational structure $\mA$ is \emph{first-order definable in $(\Z;+,1)$} (or a \emph{first-order reduct of$(\Z;+,1)$}  if it is over domain $\Z$ with relations specified by first-order formulas over $(\Z;+,1)$. 
An \emph{endomorphism} of $\mA$ is a map $h\colon\Z\to\Z$ such that for every relation $R$ of $\mA$ and every tuple $(a_1,\dots,a_k)\in \Z^k$, we have $\tuple a\in R\Rightarrow h(\tuple a)\in R$.
We say that $h$ is a \emph{self-embedding} if the implication is an equivalence.

A formula over a relational signature $\sigma$ is \emph{primitive positive (pp)}
if it is of the form $\exists x_1,\dots,x_k (\psi_1 \wedge \cdots \wedge \psi_m)$ where each
$\psi_i$ is an atomic relation built from $\sigma$. Note that $0$ is pp-definable in $(\Z;+)$. A \emph{sentence} is a formula without free variables. 

The \emph{constraint satisfaction problem} for a \blue{structure $\mA$ with finite relational signature $\sigma$}, denoted $\Csp(\mA)$, is the following computational problem. 

\medskip

{\bf Input:} A primitive positive $\sigma$-sentence $\Phi$.\\
{\bf Question:} $\mA \models \Phi$?

\medskip
All CSPs will be defined over strictly relational signatures, thus in this context $+$ must be considered a ternary relation and $1$ a constant or singleton unary relation, depending on taste. Since we also use $+$ with its common meaning of binary operation, we concede guilt for overloading. However,  the two uses will never conflict in meaning, so we will not dwell further on the matter.
\blue{If $\mA$ is first-order definable in $(\Z;<,+,1)$ then $\Csp(\mA)$ is in NP (this is noted e.g.\ in \cite{JonssonLoow}).}

    A \emph{linear equation} is a formula of the form $\sum_{i=1}^n a_ix_i=b$ with $a_1,\dots,a_n,b\in\Z$, whose free variables are $\{x_1,\dots,x_n\}$.
    A \emph{modular linear equation} is a formula of the form $\sum_{i=1}^n a_ix_i=b\bmod c$ with $a_1,\dots,a_n,b,c\in\Z$.
Let $\mathcal L_{(\Z;+,1)}$ be the infinite relational language containing a relation symbol for each  linear equation and modular linear equation.
For convenience, we consider first-order logic to have native symbols for $\top$ (true) and $\bot$ (false).
It is well-known that $(\Z;+,1)$ admits quantifier elimination in the language $\mathcal L_{(\Z;+,1)}$ (see \cite{Presburger}, or \cite[Corollary 3.1.21]{Marker} for a more modern treatment).
Call an $\mathcal L_{(\Z;+,1)}$-formula \emph{standard} if it does not contain a negated modular linear equation.
Every $\mathcal L_{(\Z;+,1)}$-formula is equivalent  to a standard $\mathcal L_{(\Z;+,1)}$-formula, since a negated modular linear equation
is equivalent to a disjunction of modular linear equations (i.e., $k\neq b\bmod c\Leftrightarrow \bigvee_{0\leq a\leq c, a\neq b} k=a\bmod c$).
We say that an equation \emph{appears} in a formula if it is a positive or negative literal in that formula.

Any subgroup $G$ of $\Z^k$ can be given by a finite set of \emph{generators}, i.e., $k$-tuples $\tuple g^1,\dots,\tuple g^m$,
such that for every $\tuple g\in G$, there are $\lambda_1,\dots,\lambda_m\in\Z$ such that $\tuple g= \sum_i\lambda_i\tuple g^i$,
where we write $\lambda\cdot\tuple g$ for $(\lambda g_1,\dots,\lambda g_k)$.
A \emph{coset} of a subgroup $G$ of $\Z^k$ is any set of the form $\tuple a+G:=\{\tuple a+\tuple g \mid \tuple g\in G\}$, where $\tuple a\in\Z^k$.
By moving to a standard formula, we are in a position to deduce the following.

\begin{proposition}\label{prop:easy}
Suppose $R$ is a unary relation first-order definable in $(\Z;+,1)$.
Then $R$ has the form $(R^{\circ} \cup R^+)\setminus R^-$,
where $R^{\circ}$ is a finite union of cosets of nontrivial subgroups of $\Z$, and $R^+$ and $R^-$ are finite disjoint sets of integers.
\label{prop:QE}
\end{proposition}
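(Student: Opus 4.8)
The plan is to use the fact that $(\Z;+,1)$ admits quantifier elimination in $\mathcal L_{(\Z;+,1)}$, so that $R$ is definable by a quantifier-free $\mathcal L_{(\Z;+,1)}$-formula $\phi(x)$ in a single variable, and moreover, by the remark preceding the proposition, we may take $\phi$ to be \emph{standard}, i.e.\ with no negated modular linear equations. Putting $\phi$ into disjunctive normal form, $R$ is a finite union of sets of the form $D = \{k \in \Z \mid \bigwedge_i \epsilon_i \wedge \bigwedge_j \delta_j\}$, where each $\epsilon_i$ is a (possibly negated) linear equation in the single variable $x$, and each $\delta_j$ is a modular linear equation in $x$ (unnegated, by standardness). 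So it suffices to show that each such $D$ is itself of the stated form, and then observe that the stated form is closed under finite unions (absorbing the finite sets suitably).

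Next I would analyse a single conjunctive cell $D$. A positive linear equation $ax = b$ in one variable is either $\top$ (if $a = b = 0$), or $\bot$, or pins $x$ to a single value; a negated one $ax \neq b$ is $\top$, $\bot$, or removes a single value. So after processing these, either $D$ is finite (at most one element from the positive equations), in which case it goes into $R^+$ directly, or there are no positive linear equations and $D$ is a cofinite subset of the solution set of the conjunction $\bigwedge_j \delta_j$ of modular linear equations. A single modular linear equation $ax = b \bmod c$ in one variable defines, by elementary number theory (via $\gcd(a,c)$), either the empty set, or a union of residue classes modulo $c$ — equivalently a finite union of cosets of the subgroup $c\Z$ (or of a suitable subgroup $d\Z$). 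A finite conjunction of these is handled by the Chinese Remainder Theorem / taking a common modulus: the intersection of two such sets is again a finite union of cosets of $m\Z$ where $m$ is the lcm of the moduli, hence again a finite union of cosets of a nontrivial subgroup of $\Z$ (or empty). Call this set $S$; then $D = S \setminus F$ for a finite set $F$, so setting $R^{\circ}$-part to $S$ and pushing $F$ into $R^-$ gives the claim for $D$, except that one must be slightly careful that the subgroups are \emph{nontrivial}: the only coset of the trivial subgroup $\{0\}$ is a singleton, which we simply move into $R^+$ instead.

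Finally I would assemble the pieces. Writing $R = D_1 \cup \dots \cup D_t$, each $D_\ell$ is either a finite set or of the form $S_\ell \setminus F_\ell$ with $S_\ell$ a finite union of cosets of nontrivial subgroups of $\Z$ and $F_\ell$ finite. Let $R^{\circ} := \bigcup_\ell S_\ell$ (finite union of cosets of nontrivial subgroups), let $R'$ be the union of the finite $D_\ell$'s together with the $F_\ell$'s — this is a finite set — and note $R = (R^{\circ} \cup R'') \setminus R^-$ where we must choose $R^+ \subseteq R' \setminus R^{\circ}$ to be the elements of $R$ not already in $R^{\circ}$, and $R^- \subseteq R'$ to be the elements of $R^{\circ}$ that were removed and not re-added, a routine bookkeeping step; one checks $R^+$ and $R^-$ are finite and disjoint and that the identity holds pointwise. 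The only genuinely non-formal step is the number-theoretic analysis of conjunctions of modular linear equations in one variable, and that is entirely standard (gcd and CRT); the main thing to be careful about is the trivial-subgroup edge case and the bookkeeping when combining the disjuncts, neither of which presents a real obstacle.
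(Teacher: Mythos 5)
Your argument is correct, and it rests on the same ingredients as the paper's proof (quantifier elimination for $(\Z;+,1)$, passing to a standard formula, and a one-variable case analysis of linear and modular atoms), but it is organised dually. The paper works with a conjunctive normal form: it first shows that each \emph{clause} has the stated shape --- a clause containing a negated linear equation defines a cofinite set, and otherwise the clause defines a finite union of cosets together with a finite set --- and then appeals to the (``easily checked'') fact that the class of sets of the form $(R^{\circ}\cup R^+)\setminus R^-$ is closed under finite \emph{intersection}; that closure step is where the lcm/CRT argument for intersecting cosets is hidden. You instead pass to disjunctive normal form, carry out the gcd/CRT analysis inside each conjunctive cell (a pinned point, or a finite union of cosets of some $d\Z$ minus finitely many points), and then only need closure under finite \emph{union}, which is indeed just the bookkeeping $R^+:=R\setminus R^{\circ}$ and $R^-:=R^{\circ}\setminus R$, both finite because $R$ and $R^{\circ}$ differ in only finitely many points. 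So your route makes the number theory explicit and the final assembly trivial, while the paper's makes the per-clause analysis nearly immediate and compresses the number theory into the closure claim; both are sound. Two cosmetic points: a positive linear equation $0\cdot x=0$ is $\top$ rather than pinning a value, so the second branch of your cell dichotomy should say ``no nontrivial positive linear equations'' (your ``after processing these'' clearly intends this), and the degenerate moduli $c\in\{0,\pm 1\}$ deserve a passing word, though they cause no difficulty.
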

\begin{proof}
	Consider a disjunction $\phi$ of equations (possibly negated and modular equations). If this disjunction contains a negated equation $ax\neq c$, then $\phi$ defines a relation that contains $\Z\setminus\{c/a\}$
	and is therefore as in the statement.
	Otherwise, $\phi$ contains only positive linear equation and modular equations, and the relation that $\phi$ defines is clearly of the form $R^{\circ} \cup R^+$ for some finite set $R^+$ and some union $R^{\circ}$ of nontrivial subgroups of $\Z$.
	
	Consider a quantifier-free formula $\phi$ in conjunctive normal form defining $R$. Each conjunct defines a relation of the right form, per the previous paragraph. It is easily checked that a conjunction
	of relations of this form is again a relation of the form $(R^{\circ} \cup R^+)\setminus R^-$, so that we have proved that every quantifier-free formula with one free variables defines a relation of the right form.
	The proposition then follows from quantifier-elimination.
\end{proof}

\blue{Note that if 
$R^+ \cap R^\circ=\emptyset$ and $R^- \subset R^\circ$, then $R^+$, $R^-$, and $R^{\circ}$ are unique. We use the terminology with this convention for all
unary relations $R$ that are first-order definable in $(\Z;+,1)$ throughout the article.}


\begin{definition}
    Let $\phi$ be an $\mathcal L_{(\Z;+,1)}$-formula.
    We say that $\phi$ is \emph{Horn} if it is a conjunction of clauses of the form
    $$\bigvee_{i=1}^n \neg\phi_i \lor \phi_0$$
    where $\phi_1,\dots,\phi_n$ are linear equations and $\phi_0$ is a linear or a modular linear equation.
\end{definition}

\begin{example}
Singletons, cofinite unary relations, and cosets of subgroups of $\Z^n$ are examples of Horn-definable relations.
\end{example}
\ignore{\begin{theorem}\label{thm:main}
    Let $\mA$ be quantifier-free definable in $\mathcal{L}_{(\Z;+,1)}$ and suppose that contains $+$ and is a core. 
    Let $\mB$ be a core of $\mA$ and suppose that $\mB$ is infinite.
    Then $\Csp(\mA)$ is NP-complete if the relations of $\mA$ are not definable by Horn formulas,
    and is in P otherwise.
\end{theorem}}

\section{Cores}

If $\mA$ is a first-order expansion of $(\Z;+)$, note that its endomorphisms are precisely of the form $x \mapsto \lambda x$ for some $\lambda\in \Z$.
Therefore, we view in the following $\End(\mA)$ as a subset of $\Z$, where the monoid structure on $\End(\mA)$ implies that as a subset of $\Z$,
it is closed under multiplication and contains $1$.
We say that $\mA$ is a core if all its endomorphisms are self-embeddings, and that $\mB$ is a core of $\mA$ if $\mA$ and $\mB$ are homomorphically equivalent and $\mB$ is a core.

\begin{lemma}\label{lem:core}
    Let $\mA$ be first-order definable in $(\Z;+,1)$, and suppose that $\mA$ contains $+$.
    There exists a structure which is a core of $\mA$,  and which is either
    a 1-element structure or first-order definable in $(\Z;+,1)$ and containing $+$.
\end{lemma}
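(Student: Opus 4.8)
The plan is to analyze the monoid $\End(\mA)$ viewed as a multiplicatively closed subset $M \subseteq \Z$ containing $1$, and to split into cases according to the structure of $M$. First I would observe that if $M = \{1\}$ (or $M = \{-1,1\}$), then $\mA$ is already a core, since every endomorphism $x \mapsto \lambda x$ with $\lambda \in \{-1,1\}$ is a self-embedding (each relation of $\mA$ is first-order definable in $(\Z;+,1)$, and checking that $x\mapsto -x$ preserves a relation negatively as well as positively is routine using that negation is an automorphism of $(\Z;+)$ — if $-1 \in M$ this needs a small argument, otherwise it is vacuous). In that case we take $\mB = \mA$ and we are done.

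Otherwise $M$ contains some $\lambda$ with $|\lambda| \ge 2$; fix such a $\lambda$ and consider the endomorphism $e\colon x \mapsto \lambda x$. The image $e(\Z) = \lambda\Z$ is a proper subgroup, and I would consider the induced structure $\mB_0$ on $\lambda\Z$, i.e. the structure whose relations are the restrictions of the relations of $\mA$ to $\lambda\Z$. Since $x \mapsto \lambda x$ is an isomorphism from $(\Z;+)$ onto $(\lambda\Z;+)$, the structure $\mB_0$ is isomorphic to a first-order expansion of $(\Z;+)$, and one checks that the pullback relations are still first-order definable in $(\Z;+,1)$ — this uses that $\lambda\Z$ is first-order definable in $(\Z;+,1)$, so one can relativize the defining formulas. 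Moreover $\mB_0$ (equivalently its isomorphic copy on $\Z$) contains $+$, because $+$ restricted to $\lambda\Z$ is exactly the graph of addition on $\lambda\Z$, which transports to $+$ on $\Z$. Thus $\mB_0$ is homomorphically equivalent to $\mA$ (via $e$ and the inclusion) and is again of the required form.

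The key step is to iterate this shrinking and argue that it terminates. The natural measure is the index, or rather: each application replaces the domain by a proper subgroup $\mu\Z$ with $|\mu|$ the absolute value of the generator; composing, after $k$ steps we are looking at the structure induced on $\lambda_1\cdots\lambda_k\,\Z$. To see the process stabilizes, I would pass to the isomorphic copy on $\Z$ at each stage and track $\End$: if the new structure still has an endomorphism $x\mapsto\mu x$ with $|\mu|\ge 2$ that is not a self-embedding, we shrink again; the obstacle is that a priori this could go on forever. The main obstacle — and the heart of the argument — is therefore to prove a \emph{descending chain / finiteness} statement: either the structure reached after finitely many steps is a core, or the chain of subgroups we obtain must already stabilize for a structural reason (for instance, because $\mA$ has finitely many relations and each is first-order definable, so only finitely many "moduli" $c$ occur in the quantifier-free defining formulas after quantifier elimination; once we shrink past the least common multiple of all these $c$, further multiplication by scalars coprime to that lcm becomes a self-embedding, and multiplication by non-coprime scalars can be controlled). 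I would make this precise by showing that after shrinking by a suitable fixed modulus the remaining non-trivial endomorphisms act "affinely" in a way compatible with all relations, forcing self-embedding; alternatively, one shows that a non-core structure of this form always admits a shrinking that strictly decreases some finite invariant (such as the lcm of the moduli appearing, together with the index), and that this invariant cannot decrease forever. The degenerate outcome of the iteration — when the relations force the domain down to a single point — is exactly the "1-element structure" alternative in the statement.

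Finally I would note the edge case where $\mA$ itself is such that \emph{every} scalar map is an endomorphism but none is a self-embedding except $\pm 1$: then the first case already applies. Wrapping up, $\mB$ is obtained as the last structure in the (now provably finite) shrinking chain; it is a core by construction, it is homomorphically equivalent to $\mA$ since it is a retract, and it is either a single point or a first-order reduct of $(\Z;+,1)$ containing $+$, as claimed.
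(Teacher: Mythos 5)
Your overall skeleton is the same as the paper's: shrink along an endomorphism $x\mapsto\lambda x$ with $|\lambda|\ge 2$, transport the induced structure on $\lambda\Z$ back to $\Z$ (semantically this is exactly replacing each relation $R$ by $\{\tuple a : \lambda\tuple a\in R\}$), iterate, and argue the process stabilizes at a core. The genuine gap is that the two steps where all the work lies are only gestured at, not proved. First, termination: you say you ``would make this precise'' via an lcm-of-moduli invariant or some decreasing finite invariant, but you never exhibit one. The paper does this by a concrete syntactic division operation $\psi/\lambda$ on quantifier-free standard formulas (linear equations $\sum\lambda_i x_i=c$ become $\sum\lambda_i x_i = c/\lambda$ or $\bot$; modular equations get their modulus divided by $\gcd(\lambda,d)$), and observes that the absolute values of the right-hand constants and the moduli strictly decrease except when all linear equations are already homogeneous and all moduli are coprime to $\lambda$ --- that is the termination measure, and it is exactly what your sketch is missing. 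Second, and just as important, stabilization alone does not finish the proof: you must show the limit structure is a \emph{core}, i.e.\ that every endomorphism $x\mapsto\lambda x$ of it is a self-embedding. Your remark that ``multiplication by scalars coprime to the lcm becomes a self-embedding'' is the right intuition but is unproved and incomplete: one needs that at the fixpoint all non-modular equations are homogeneous (so scaling is harmless for them) and, for the modular equations, an argument such as the paper's use of $\lambda^{\varphi(d)}=1\bmod d$ (Euler's totient) to invert the scaling modulo $d$ and conclude $\lambda\tuple a\in R\Rightarrow\tuple a\in R$. Without these two pieces the proposal is a plan, not a proof.

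A smaller but real defect is the treatment of the one-element alternative. Your shrinking chain lives on infinite subgroups $\lambda_1\cdots\lambda_k\Z$, so it can never ``force the domain down to a single point''; the $1$-element case in the statement arises instead from $0\in\End(\mA)$ (the constant map to $0$ being an endomorphism), in which case $\mA$ is homomorphically equivalent to the substructure on $\{0\}$ and one stops immediately. Your case split ($\End(\mA)\subseteq\{\pm1\}$, or some $|\lambda|\ge 2$ exists) also misses possibilities such as $\End(\mA)=\{0,1\}$, which fall into neither case; handling $0\in\End(\mA)$ first, as the paper does, repairs both points.
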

\begin{proof}
    If $0\in \End(\mA)$ then the lemma is clearly true ($\mA$ being homomorphically equivalent to the substructure of $\mA$ induced by $\{0\}$), so let us assume that $0\not\in \End(\mA)$.
    Similarly we can assume that $\End(\mA)\not\subseteq\{-1,1\}$, otherwise $\mA$ is already a core.
    For a quantifier-free formula $\psi$ and an integer $\lambda$, define $\psi/\lambda$ by induction on $\psi$ as follows:
    \begin{itemize}
        \item if $\psi$ is $\sum \lambda_ix_i=c$ and $\lambda$ divides $c$, then $\psi/\lambda$ is $\sum \lambda_ix_i=c/\lambda$,
        \item if $\psi$ is $\sum \lambda_ix_i=c$ and $\lambda$ does not divide $c$, then $\psi/\lambda$ is $\bot$,
        \item if $\psi$ is $\sum \lambda_ix_i = c\bmod d$ and $\ell:=\gcd(\lambda,d)$ divides $c$, then $\psi/\lambda$ is $\sum \lambda_ix_i= ec/\ell\bmod d/\ell$ where $e$ is the inverse of $\lambda/\ell$ modulo $d/\ell$,
         \item if $\psi$ is $\sum \lambda_ix_i = c\bmod d$ and $\ell:=\gcd(\lambda,d)$ does not divide $c$, then $\psi/\lambda$ is $\bot$,
        \item extend to boolean combinations in the obvious fashion.
    \end{itemize}
    Note that for every tuple $\tuple a$, we have that $\tuple a$ satisfies $\psi/\lambda$ iff $\lambda\cdot\tuple a$ satisfies $\psi$.
    Indeed, if $\psi$ is a linear equation then this is clear. Similarly, it is clear if $\psi$ is a modular equation and $\ell:=\gcd(\lambda,d)$ does not divide $c$.
    Suppose that $\psi$ is a modular equation and $\ell:=\gcd(\lambda,d)$ divides $c$. If $\sum \lambda\lambda_ix_i=c\bmod d$ then $\lambda/\ell\cdot(\sum \lambda_ix_i) = qd/\ell + c/\ell$
    so that $e\lambda/\ell\cdot(\sum \lambda_ix_i) = (eq)\cdot d/\ell + ec/\ell$, where $e$ is the inverse of $\lambda/\ell$ modulo $d/\ell$ and $q\in \Z$.
    We therefore obtain $\sum \lambda_ix_i = ec/\ell\bmod d/\ell$.
    Conversely if $\sum \lambda_ix_i = ec/\ell\bmod d/\ell$ then $\sum \lambda\lambda_ix_i = (\lambda e)c/\ell + (\frac{\lambda}{\ell}q)d = c\bmod d$.
    
    Let $\psi$ be any quantifier-free $\mathcal L_{(\Z;+,1)}$-formula and suppose that $|\lambda|>1$.
    The only cases where some magnitudes of the integers on the right-hand sides of terms in the formula $\psi$ do not decrease by forming $\psi/\lambda$ is when $\psi$
    only contains literals either of the form $\sum \lambda_ix_i=0$ or of the form $\sum \lambda_ix_i = c\bmod d$ with $\lambda$ and $d$ coprime.
    Therefore, the sequence $\psi_0,\psi_1,\psi_2,\dots$ where $\psi_0$ is $\psi$ 
    and where $\psi_{i+1}$ is $\psi_i/\lambda$ for some $\lambda\in \End(\mA)$ with $|\lambda|>1$ reaches in a finite number of steps a fixpoint where all the literals
    are either of the form $\sum \lambda_ix_i=0$ or are modular equations whose modulus $d$ is such that $\lambda$ and $d$ are coprime.
    Let $n\geq1$ be such that for every $\psi$ defining a relation of $\mA$, the formula $\psi_n$ is a fixpoint.
    Let $\mB$ be the structure whose domain is $\Z$ and whose relations are $+$ and the relations defined by $\psi_n$
    for each $\psi$ defining a relation of $\mA$.
   
    We claim that $\mB$ is homomorphically equivalent to $\mA$ and is a core.
    The first claim is clear, since $\mB$ is isomorphic to the structure obtained from $\mA$ by successive applications of endomorphisms $x\mapsto \lambda\cdot x$ (in particular $\mB$ embeds into $\mA$).
    Let now $x\mapsto \lambda\cdot x$ be an endomorphism of $\mB$, and suppose that $\tuple a$ is a tuple in a relation $R$ of $\mB$.
Then we have that $\lambda \cdot \tuple a$ in $R$ since $x\mapsto \lambda\cdot x$ is an endomorphism. Conversely, note that $\lambda$ is coprime to $d$ or else we would not have reached a fixed point in the previous stage.
Thus, $\lambda^{\phi(d)}=1 \bmod d$, where $\phi(d)$ here is the totient of $d$.
It follows then that $\lambda^{\phi(d)}\tuple a =\tuple a \bmod d$. Suppose $\lambda\tuple a \in R$, then by applying $\phi(d)-1$ times an endomorphism, we derive $\lambda^{\phi(d)}\tuple a \in R$. It follows that $\tuple a \in R$, for both the cases that atoms are of the form $\sum \lambda_ix_i=0$ or are modular equations whose modulus $d$ is such that $\lambda$ and $d$ are coprime. Hence,  $x\mapsto \lambda\cdot x$ is an embedding of $\mA$.
\end{proof}

\blue{We order the standard formulas lexicographically with respect to (in this order)} 
\begin{enumerate}
	\item \blue{the} number of non-Horn clauses,
	\item \blue{the number of literals in clauses with at least two literals,}
	\item \blue{the number of all literals}, and 
	\item \blue{the sum of the absolute values} of all numbers appearing in an equation. 
\end{enumerate}
\blue{This order is used in a number of statements and proofs throughout the text, e.g., in Proposition~\ref{prop:syntactic-core}, 
Lemma~\ref{lem:step-1}, 
and Theorem~\ref{thm:not-Horn-hard}. 
 }
A \blue{standard} formula is \emph{minimal} if no smaller formula is equivalent to it. 

The following properties follow from the construction of cores in the previous proof.
\begin{proposition}\label{prop:syntactic-core}
	Let $\mA$ be first-order definable in $(\Z;+,1)$, and suppose that $\mA$ contains $+$ and is a core. Let $\lambda\in\End(\mA)$.
	Let $R$ be a relation of $\mA$ and let $\phi$ be a minimal standard formula defining $R$.
    \begin{itemize}
        \item If $\sum\lambda_ix_i=c$ is a linear equation appearing in $\phi$, then $c=0$ or $|\lambda|=1$.
    \item If $\sum\lambda_ix_i=c\bmod d$ is a modular linear equation in $\phi$, then $\lambda$ and $d$ are coprime.
    \end{itemize}
   Moreover, if $\End(\mA)=1+d\Z$ for some $d\geq 2$, then every relation of $\mA$
    can be expressed with a minimal formula in which all modular linear equations are modulo a divisor of $d$.
\end{proposition}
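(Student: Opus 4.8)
The plan is to use the operation $\psi\mapsto\psi/\lambda$ from the proof of Lemma~\ref{lem:core} as a supply of equivalent formulas. Recall it satisfies $\tuple a\models\psi/\lambda\iff\lambda\tuple a\models\psi$; and since $\mA$ is a core and $\lambda\in\End(\mA)$, the map $x\mapsto\lambda x$ is a self-embedding of $\mA$, so that $\lambda\tuple a\in R\iff\tuple a\in R$ for every tuple $\tuple a$. Hence, if $\phi$ defines $R$, then so does $\phi/\lambda^{m}$ for every $m\ge0$. If $|\lambda|\le1$ both items are immediate (the first is vacuous, and $\gcd(\pm1,d)=1$), so I would assume $|\lambda|\ge2$. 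I would also record the harmless normalisations that minimality imposes on $\phi$: no literal equals $\top$ or $\bot$, every modulus is at least $2$, and the constant of each modular equation lies in $\{0,\dots,e-1\}$ for $e$ its modulus (otherwise reducing it strictly lowers the fourth of the four criteria).

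Now suppose towards a contradiction that $\phi$ has a \emph{bad} literal: a linear equation (positive or negated) with nonzero right-hand side, or a modular equation whose modulus $e$ is not coprime to $\lambda$. Let $m\ge1$ be a common multiple of the multiplicative orders of $\lambda$ modulo every modulus of $\phi$ that \emph{is} coprime to $\lambda$ (take $m=1$ if there is none). I would compare $\phi/\lambda^{m}$ with $\phi$ in the lexicographic order introduced above. All left-hand sides of equations are untouched; the positive literals of a clause either stay positive or vanish (a literal becoming $\top$ deletes its whole clause, one becoming $\bot$ is simply dropped), so no clause becomes non-Horn and the first three criteria cannot increase. Linear equations with right-hand side $0$, and modular equations of modulus coprime to $\lambda$, are returned verbatim --- for the latter, by the choice of $m$ together with the normalisation of constants --- so their contribution to the fourth criterion is unchanged. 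Each bad literal, by contrast, has its contribution to the fourth criterion strictly reduced: a nonzero linear right-hand side is divided by $\lambda^{m}$ (possibly all the way to $\bot$ if positive, or $\top$ if negated, which shrinks or deletes a clause), and a modulus not coprime to $\lambda$ is at least halved (possibly collapsing the equation to $\top$ or $\bot$). Hence no criterion increases, while the presence of a bad literal forces a strict decrease in some criterion --- the fourth, or an earlier one if a clause was deleted, or automatically if $\phi/\lambda^{m}$ has collapsed to $\top$ or $\bot$ --- contradicting the minimality of $\phi$. The only genuinely non-formal step is this coordinate-by-coordinate comparison, and the delicate point inside it is that the modular equations of modulus coprime to $\lambda$, which $/\lambda$ merely permutes among themselves, really do return to their exact original form; this is precisely why $m$ is taken as a multiple of those multiplicative orders.

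Finally, for the ``moreover'', let $\End(\mA)=1+d\Z$ with $d\ge2$, let $R$ be any relation of $\mA$, and let $\phi$ be a minimal standard formula defining it. Since $1+d\in\End(\mA)$ has absolute value $>1$, the first item forces every linear equation of $\phi$ to have right-hand side $0$, and the second item forces every modulus $e$ of $\phi$ to satisfy $\gcd(\lambda,e)=1$ for all $\lambda\in1+d\Z$. It then remains to invoke the elementary fact that, for $e\ge1$, one has $\gcd(\lambda,e)=1$ for every $\lambda\equiv1\pmod d$ if and only if every prime divisor of $e$ divides $d$: if some prime $p\mid e$ failed to divide $d$, then $d$ would be invertible modulo $p$, so a solution $\mu$ of $1+d\mu\equiv0\pmod p$ would produce $\lambda=1+d\mu\in1+d\Z$ with $p\mid\gcd(\lambda,e)$; conversely a common prime factor $p$ of $\lambda$ and $e$ divides $d$, forcing $\lambda\equiv1\not\equiv0\pmod p$. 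Hence every modulus occurring in $\phi$ divides a power of $d$, which gives the claim. I expect no obstacle here beyond this observation.
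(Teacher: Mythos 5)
Your proof of the two bulleted items is correct and is essentially the paper's own argument made explicit: the paper treats these items as ``clear from the proof of Lemma~\ref{lem:core}'', and your device of dividing by $\lambda^{m}$, with $m$ a common multiple of the multiplicative orders of $\lambda$ modulo the moduli coprime to $\lambda$ (so that those literals return verbatim while every bad literal strictly shrinks or disappears in the lexicographic order), is exactly the intended fleshing-out. One cosmetic point: $|\lambda|\le 1$ also includes $\lambda=0$, which you should discard explicitly by noting that $0$ is never an endomorphism of a core containing $+$ on $\Z$.

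The ``moreover'' part, however, has a genuine gap in its last sentence. What your elementary fact yields is that every modulus $d'$ occurring in $\phi$ has all of its prime divisors dividing $d$, i.e.\ $d'$ divides a \emph{power} of $d$; the proposition asserts that $d'$ divides $d$ itself, and ``which gives the claim'' does not bridge this: $d'=4$ is coprime to every element of $1+2\Z$, yet $4\nmid 2$. Indeed, your own equivalence shows that coprimality with all of $1+d\Z$ is exactly the radical condition, so the stronger conclusion cannot be extracted from the second item alone; concretely, $\mA=(\Z;+,2+4\Z)$ is a core with $\End(\mA)=1+2\Z$, and the relation $2+4\Z$ cannot be defined by any standard formula whose moduli divide $2$. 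The paper argues this part differently: from coprimality it picks $\ell$ with $\ell d=-1\bmod \frac{d'}{\gcd(d,d')}$ and runs a B\'ezout computation to conclude $d'\mid d$ --- but such an $\ell$ exists only when $d$ is coprime to $d'/\gcd(d,d')$, which is not justified there and fails in the example just given. So your computation in fact pinpoints a real weak spot in the ``moreover'' statement and its proof; nevertheless, as a proof of the proposition as stated, your final inference is where it breaks, and you would either need that additional coprimality or settle for the weaker conclusion ``modulo a divisor of a power of $d$''.
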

\begin{proof}
    The two items are clear from the proof of Lemma~\ref{lem:core}.
	For the last statement, let $d'$ be a modulus appearing in a minimal definition of a relation of $\mA$.
	By the second item, we have that $d'$ and $1+kd$ are coprime, for all $k\in\Z$.
	Let $\ell$ be such that $\ell d= -1 \bmod\frac{d'}{\gcd(d,d')}$.
	If $d'$ and $1+\ell d$ are coprime, there exist $u,v\in\Z$ such that $ud' + v(1+\ell d) = 1$.
	Taking this equation modulo $\frac{d'}{\gcd(d,d')}$ we obtain $0=1\bmod  \frac{d'}{\gcd(d,d')}$,
	so that $\gcd(d,d')=d'$ and $d'$ divides $d$.
\end{proof}

\section{Hardness}

\def\NAE{\textsf{Not-All-Equal-3-SAT}}
\def\OneInThree{\textsf{1-in-3-SAT}}

Our sources of hardness come from \emph{pp-interpretations}, that we define now.
A structure $\mB$ is said to be \emph{one-dimensional pp-interpretable} in $\mA$ if there exists a partial surjective map $h\colon A\to B$, called the \emph{coordinate map},
such that the inverse image of every relation of $\mB$ (including the equality relation and the unary relation $B$) under $h$ has a pp-definition in $A$.
Formally, we require that for every $k$-ary relation $R$ of $\mB$, there exists a pp-formula $\phi_R(x_1,\dots,x_k)$ in the language of $\mA$
such that
\[ \mA\models\phi_R(a_1,\dots,a_k) \Leftrightarrow \mB\models R(h(a_1),\dots,h(a_k))\]
holds for all $a_1,\dots,a_k\in A$. \blue{This requirement for the equality relation of $\mB$ and the unary relation $B$ implies} that the kernel of $h$ and its domain have a pp-definition in $\mA$.
It is well-known that if $\mB$ is pp-interpretable in $\mA$,
then $\Csp(\mB)$ reduces in polynomial time to $\Csp(\mA)$.

\subsection{The fully modular case}
\label{sect:fully-mod}
One of the sources of hardness for our problems are expansions of the \emph{general subgroup problem} from~\cite{FederVardi}.
The general subgroup problem of a finite abelian group $G$ is the CSP of $(G;+)$ expanded with a $k$-ary relation for every coset $\tuple a+ H$,
where $H$ is a subgroup of $G^k$.
It is known that this problem is solvable in polynomial time (under some reasonable encoding of the input); in modern parlance, this follows from the fact that the operation $(x,y,z)\mapsto x-y+z$
is a Maltsev polymorphism of the template.
Feder and Vardi~\cite[Theorem 34]{FederVardi} proved that the problem becomes NP-hard if the template is further expanded by any other relation.
 
The general subgroup problem of $\Z/d\Z$ can be viewed as a CSP of a first-order reduct of $(\Z;+,1)$
whose relations are defined by quantifier-free formulas only containing modular linear equations.
This motivates the following definition.

\begin{definition}
A relation $R\subseteq\Z^k$ is called \emph{fully modular} if it is definable by a conjunction of disjunctions of modular linear equations,
in which case we can even assume that all the modular linear equations involved in such a definition of $R$ have the same modulus $d\geq 1$.
\end{definition}

\begin{proposition}\label{prop:hardness-fully-modular}
    \blue{Let $\mA$ be a finite-signature core which is first-order definable in $(\Z;+,1)$ and contains $+$.} 
    Suppose that $\mA$ has a fully modular relation that is not Horn-definable.
    Then $\Csp(\mA)$ is NP-complete.
\end{proposition}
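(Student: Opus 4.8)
The plan is as follows. Membership of $\Csp(\mA)$ in NP is immediate, since $\mA$ is first-order definable in $(\Z;+,1)$, hence in $(\Z;<,+,1)$. For hardness, fix a fully modular relation $R\subseteq\Z^k$ of $\mA$ that is not Horn-definable, and pick a common modulus $d$ for all the modular linear equations in a defining conjunction of disjunctions of modular equations for $R$. As the empty relation and $\Z^k$ are Horn-definable, $R\neq\emptyset$ and $d\ge 2$. Since $R$ is a Boolean combination of equations modulo $d$, it is invariant under adding a multiple of $d$ to any coordinate, so $R=h^{-1}(R')$, where $h\colon\Z\to\Z/d\Z$ is the quotient map and $R'\subseteq(\Z/d\Z)^k$.

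I would then reduce from Feder and Vardi~\cite[Theorem 34]{FederVardi}: it suffices to pp-interpret in $\mA$, with coordinate map $h$, the general subgroup problem of $\Z/d\Z$ further expanded by $R'$, provided $R'$ is not a coset of a subgroup of $(\Z/d\Z)^k$. The latter holds because its negation fails: if $R'=\tuple a+H$ with $H=\{\tuple y:\bigwedge_l\sum_i c_{li}y_i\equiv 0\bmod d\}$, then $R'=\{\tuple y:\bigwedge_l\sum_i c_{li}y_i\equiv e_l\bmod d\}$ for suitable $e_l\in\Z$, whence $R=h^{-1}(R')$ is defined by the conjunction of modular linear equations $\bigwedge_l\sum_i c_{li}x_i\equiv e_l\bmod d$, a Horn formula, contradicting the choice of $R$. (An intersection of several cosets is a coset or empty, so all cases are covered.)

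For the pp-interpretation itself, the easy relations are: the domain $\Z$; the kernel $\{(x,y):d\mid x-y\}$; the graph $\{(x,y,z):d\mid x+y-z\}$ of addition on $\Z/d\Z$; and the preimage $h^{-k}(H)$ of a subgroup $H\le(\Z/d\Z)^k$ --- all of these are finitely generated subgroups of powers of $\Z$ and hence already pp-definable in $(\Z;+)$ by a formula of the shape $\exists\lambda_1\cdots\lambda_m\,\bigwedge_i(x_i=\sum_j\lambda_j g^j_i)$ --- and $h^{-1}(R')$ is literally the relation $R$ of $\mA$. What is left are the preimages of the \emph{non-trivial} cosets $\tuple a+H$ (those not containing $\tuple 0$): these are cosets of finitely generated subgroups of $\Z^k$ that are not themselves subgroups, hence not pp-definable in $(\Z;+)$, and this is precisely where the hypothesis that $R$ is not Horn must be used. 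It would suffice to pp-define in $\mA$, for each residue $c$, the unary relation $c+d\Z$; given these, the preimage of $\tuple a+H$ can be written as $\{\tuple x:\bigwedge_l\exists w_l(w_l=\sum_i c_{li}x_i\wedge w_l\in e_l+d\Z)\}$, which is pp in $\mA$ because each term $\sum_i c_{li}x_i$ is expressible using $+$.

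Thus the crux --- and the step I expect to be the main obstacle --- is to pp-define from $R$ and $+$ some coset of $d\Z$ other than $d\Z$ itself, and more generally enough such cosets (and residue classes for divisors of $d$) to recover every relation of the Feder--Vardi template. Here I would start from a \emph{minimal} standard formula for $R$ (minimal with respect to the order on standard formulas fixed above): the failure of the syntactic Horn condition yields a clause with two positive equation literals that cannot be merged into one, and by existentially quantifying the remaining variables and identifying variables inside this clause one should extract a unary relation that is a proper, non-subgroup union of residue classes modulo a divisor of $d$; iterating this together with the group operation then yields all the needed coset relations. The delicate point is to show that this extraction always succeeds, i.e.\ that the syntactic non-Horn-ness of a fully modular relation always reflects a genuine semantic disjunction that survives projection --- equivalently, that $R'$ fails to be preserved by $(x,y,z)\mapsto x-y+z$ in a way that is pp-detectable. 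Granting this, every relation of the template has a pp-definable preimage in $\mA$, so its CSP reduces in polynomial time to $\Csp(\mA)$, and therefore $\Csp(\mA)$ is NP-complete.
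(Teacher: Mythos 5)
Your overall target is the right one (reduce from Feder--Vardi Theorem~34 via the projection $h\colon x\mapsto x\bmod d$, after checking that $R'$ is not a coset because $R$ is not Horn-definable; this part matches the paper), but the step you yourself flag as ``the main obstacle'' is precisely the heart of the proof, and your sketch for it does not go through. Extracting from a minimal non-Horn definition a unary relation that is a proper, non-coset union of residue classes is not the issue --- $R$ may already be such a relation, e.g.\ $R=4\Z\cup(1+4\Z)$ --- and possessing such a relation does not give you any individual nontrivial coset. Concretely, for $\mA=(\Z;+,4\Z\cup(1+4\Z))$ the map $x\mapsto 4x$ is an endomorphism, so no coset $c+4\Z$ with $c\not\equiv 0\bmod 4$ is pp-definable in $\mA$, and in fact $\Csp(\mA)$ is in P because $\mA$ is homomorphically equivalent to $(\Z;+,\Z)$. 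This structure is not a core, so it does not contradict the proposition, but it shows that your extraction cannot follow from syntactic non-Horn-ness and minimality alone: the core hypothesis must be used in an essential way, and your outline never engages with it.

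The paper closes exactly this gap by a different mechanism. It pp-interprets the finite quotient $\mA/d\mA$ in $\mA$ (coordinate map $x\mapsto x\bmod d$, kernel $\exists z(x-y=dz)$), observes that $\mA/d\mA$ is again a \emph{core} because $\mA$ is, and then invokes the standard fact for finite cores (Proposition~3.3 of \cite{wonderland}) that expanding by constants does not increase the complexity of the CSP; in $(\mA/d\mA,1)$ every coset of every subgroup of $(\Z/d\Z)^k$ is pp-definable, $R'$ is not a coset, and Feder--Vardi Theorem~34 gives NP-hardness, which pulls back to $\Csp(\mA)$ through the interpretation. So to complete your argument you would either have to reproduce this ``finite core plus constants'' step or find another route that genuinely exploits the core assumption; as written, the coset-defining step is missing, not merely delicate.
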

\begin{proof}
Let $R$ be a relation of $\mA$ that is not Horn-definable and fully modular, and let $d\geq 1$ be such that $R$ can be defined with only linear equalities modulo $d$.
\blue{Let $\mA/d\mA$ be the structure with domain $\Z/d\Z$ containing the ternary relation $+$
as well as a relation $S'$ for every relation $S$ of arity $k$ of $\mA$,} defined by
\[S'=\{(a_1,\dots,a_k) \mid \exists q\in\Z : (qd+a_1,\dots,qd+a_k)\in S\}.\]
Note that $\mA/d\mA$ is pp-interpretable in $\mA$: the coordinate map is the canonical projection $x\mapsto x\bmod d$,
whose kernel is pp-definable by the formula $\phi_=(x,y) := \exists z(x-y=dz)$.
As a consequence, $\Csp(\mA/d\mA)$ reduces in logarithmic space to $\Csp(\mA)$.
Moreover, if $\mA$ is a core then $\mA/d\mA$ is also a core.
It follows from general principles~\cite[Proposition 3.3]{wonderland} that $\Csp(\mA/d\mA,1)$ reduces to $\Csp(\mA/d\mA)$ and so to $\Csp(\mA)$.
Note that every coset of a subgroup of \blue{$(\Z/d\Z)^k$} is pp-definable in $(\mA/d\mA,1)$ and that if $R$ is not Horn-definable
then $R'$ is not a coset of a subgroup. It follows from Theorem 34 in the bible~\cite{FederVardi} that $\Csp(\mA)$ is NP-complete.
\end{proof}

\subsection{The unary case}
\def\Family{\{S_\lambda\}_{\lambda\in\Lambda}}
\label{sect:unary}
In order to prove Theorem~\ref{thm:dichotomy}, we now focus on the case of \emph{parametrised unary relations}.

\begin{definition}[Compatibility]
    Let \blue{$\Lambda\subseteq\Z \setminus \{0\}$} be a set containing $1$.
    We say that a set $\{S_\lambda\}_{\lambda\in\Lambda}$ of subsets of $\Z$
    \blue{that are definable in $(\Z;+,1)$} 
     is \emph{compatible}
    if there exist disjoint finite sets $A,B\subseteq\Z$ such that 
    \begin{itemize}
    \item $S_\lambda = (S^\circ_\lambda\cup \lambda\cdot A)\setminus \lambda \cdot B$ for all $\lambda\in\Lambda$
    and 
    \item for all $d\geq 1$ and $c\in\{0,\dots,d-1\}$, we have $c+d\Z\subseteq S^\circ_1\Leftrightarrow \lambda c+d\Z\subseteq S^\circ_\lambda$.
    \end{itemize}
    \end{definition}
    
 \begin{definition}[Uniform pp-definability]
   Let $\mA$ be a first-order reduct of $(\Z;+,1)$.  
    We say that $\{S_\lambda\}_{\lambda\in\Lambda}$ is \emph{uniformly pp-definable in $\mA$} if there exists a pp-formula $\theta(x,y)$
    such that $a\in S_\lambda$ if, and only if, $\mA\models \theta(\lambda,a)$.
\end{definition}

Note that the definition of being uniformly pp-definable implies that $\Lambda$ has a pp-definition in $\mA$, for $\exists y. \,\theta(x,y)$ is a pp-definition.
\blue{Let $S\subseteq\Z^2$ be a binary relation that is pp-definable in $\mA$. 
Then the family $\Family$
where $\Lambda := \{a \in \Z \mid (a,b) \in S  \text{ for some } b \in \Z \} \subseteq \Z \setminus \{0\}$ and $S_\lambda := \{a\in\Z\mid (\lambda,a)\in S\}$} 
is uniformly pp-definable in $\mA$.
\blue{But even if $S$ contains a tuple of the form $(1,b)$ and 
no tuple of the form $(0,b)$, it might not necessarily satisfy the compatibility 
condition, as illustrated in the following example.}
\begin{example}
\blue{Let $S=\{(a,b)\in\Z^2\mid a \neq 0 \wedge (a=b\lor a=2b)\}$. Then $\Lambda = \Z \setminus \{0\}$, and for $\lambda \in \Lambda$ we have 
$S_\lambda = \{\lambda\}$ if $\lambda=1\bmod 2$ and $S_\lambda = \{\lambda,\frac{\lambda}{2}\}$
if $\lambda=0\bmod 2$.} Therefore, the compatibility condition is not satisfied by 
$\Family$.
\end{example}

In the following proof, we write \OneInThree\ for $\Csp(\{0,1\};\{(1,0,0),(0,1,0),(0,0,1)\})$.
It is well-known that this problem is NP-complete  \blue{(\cite{Schaefer}; for a proof see~\cite{Papa})}.

\begin{lemma}\label{lem:finite-nontrivial-hard}
    Let $\mA$ be a \blue{finite-signature} first-order reduct of $(\Z;+,1)$ containing $+$.
    If $\{S_\lambda\}_{\lambda\in\Lambda}$ is a compatible set of unary relations that is uniformly pp-definable in $\mA$
    and if $1<|S_\lambda|<\infty$ for all $\lambda\in\Lambda$, then $\Csp(\mA)$ is NP-hard.
\end{lemma}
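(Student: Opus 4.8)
The plan is to reduce \OneInThree\ to $\Csp(\mA)$. First I would fix a compatible family $\{S_\lambda\}_{\lambda\in\Lambda}$ with witnessing finite disjoint sets $A,B\subseteq\Z$, and a pp-formula $\theta(x,y)$ uniformly defining it, so that $a\in S_\lambda\iff\mA\models\theta(\lambda,a)$. Since $1<|S_\lambda|<\infty$ for all $\lambda$, each $S_\lambda$ is a finite set with at least two elements; in particular the ``periodic part'' $S^\circ_\lambda$ must be empty for every $\lambda$ (a nontrivial coset of a nontrivial subgroup of $\Z$ is infinite), so by the compatibility clause $c+d\Z\subseteq S^\circ_1\iff\lambda c+d\Z\subseteq S^\circ_\lambda$ both sides are always false, and we get the clean description $S_\lambda=\lambda\cdot A\setminus\lambda\cdot B=\lambda\cdot(A\setminus B)$ for all $\lambda\in\Lambda$. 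Write $T:=A\setminus B$, so $|T|\ge 2$; pick two distinct elements $t_0,t_1\in T$, and put $s:=t_1-t_0\neq 0$.

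Next I would build the gadget. The key observation is that because $+$ and the constant $0$ are pp-definable, the map $x\mapsto\lambda x$ is ``available'' whenever $\lambda$ is a fixed integer: for a concrete $\lambda\in\Z$, the relation $\{(x,y)\mid y=\lambda x\}$ is pp-definable in $(\Z;+)$ by repeated addition. The idea is to encode a Boolean variable by an integer variable $z$ constrained to lie in $\{t_0,t_1\}$ (after an affine normalisation, in $\{0,1\}$ or in $\{0,s\}$), and to encode the $1$-in-$3$ clause on $(z_1,z_2,z_3)$ by an equation saying that a suitable linear combination equals a target value that can only be met when exactly one of the $z_i$ equals $t_1$. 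Concretely, I would introduce, for each clause, variables forced into $S_\lambda$ for an appropriate choice of $\lambda$; here $\lambda$ must itself be produced as the value of a variable constrained by the pp-definition of $\Lambda$ (recall $\exists y\,\theta(x,y)$ pp-defines $\Lambda$ and $1\in\Lambda$, so the value $1$ is pp-definable, whence $S_1=T$ is pp-definable). Using $S_1=T$ and the pp-definability of $0$ and of addition, the set $\{t_0,t_1\}$ can be carved out of $T$ — if $|T|>2$ one first intersects $T$ with a pp-definable two-element superset, or more robustly one works with the relation $T$ directly and observes that $1$-in-$3$-SAT restricted to any fixed finite domain with a designated nonempty proper ``true'' part is still NP-hard, or reduces to ordinary $1$-in-$3$-SAT. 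Then the clause relation $\{(x_1,x_2,x_3)\in T^3\mid\text{exactly one }x_i=t_1\}$ is pp-definable: take $\phi(x_1,x_2,x_3):=T(x_1)\wedge T(x_2)\wedge T(x_3)\wedge(x_1+x_2+x_3=2t_0+t_1)$, which forces each $x_i\in\{t_0,t_1\}$ (the only way three elements of $T$ sum to $2t_0+t_1$, after possibly enlarging $T$, is handled by first shrinking $T$ to $\{t_0,t_1\}$; the shrinking uses that $\{t_0,t_1\}=T\cap(t_0+\{0,s\})$ and $\{0,s\}$-membership is itself obtained from a translate of $T$). Composing these pp-definitions gives a pp-interpretation of $(\{0,1\};\{(1,0,0),(0,1,0),(0,0,1)\})$ in $\mA$ (coordinate map $t_1\mapsto 1$, $t_0\mapsto 0$), and the standard fact quoted in the excerpt then yields a polynomial-time reduction from \OneInThree\ to $\Csp(\mA)$, so $\Csp(\mA)$ is NP-hard.

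The main obstacle I anticipate is the ``shrinking'' step: the set $T=A\setminus B$ can have more than two elements and need not be an arithmetic progression or anything structured, so cutting it down to a clean two-element pp-definable set — and doing so uniformly, using only pp-formulas built from $\theta$, $+$, and $0$ — requires care. The trick is that compatibility gives us all the $S_\lambda=\lambda T$ simultaneously and uniformly, so we can intersect translates/dilates: for instance $T\cap(c+T')$ for pp-definable translates, or exploit that for a well-chosen $\lambda$ the dilate $\lambda T$ meets $T$ in a controlled way; alternatively, one simply notes that for any fixed finite $T$ with $|T|\ge 2$, the constraint language $(\Z;+,0,T)$ already pp-defines the $1$-in-$3$ relation on a two-element subset by a finite case analysis, and finiteness of $T$ makes this a bounded computation independent of the input. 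Either way, once a pp-definable $2$-element set and the corresponding $1$-in-$3$ relation are in hand, the reduction is routine; the genuinely delicate part is ensuring the gadget uses only relations of the finite-signature structure $\mA$ (hence the emphasis on pp-definability throughout) and that the affine normalisation sending $\{t_0,t_1\}$ to $\{0,1\}$ respects pp-definability, which it does because it is an affine map over $\Z$ with integer coefficients and $+,0$ are pp-definable.
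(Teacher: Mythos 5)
There is a genuine gap, and it sits exactly at the point your plan leans on hardest: the claim that ``$1\in\Lambda$, so the value $1$ is pp-definable, whence $S_1=T$ is pp-definable''. Knowing that $1\in\Lambda$ does not make the singleton $\{1\}$ pp-definable in $\mA$: a pp-definable set must be preserved by every endomorphism $x\mapsto\lambda x$ of $\mA$, and under the hypotheses of this lemma $\End(\mA)$ may well be $\{1,-1\}$ or $1+d\Z$ (these are precisely the cases the surrounding sections have to deal with), so no singleton other than a subset of $\{0\}$ is pp-definable; over $(\Z;+)$ alone only $0$ and homogeneous equations are available. Consequently $S_1=T$ is not at your disposal, and neither are gadget equations with concrete constants such as $x_1+x_2+x_3=2t_0+t_1$, nor translates like $t_0+\{0,s\}$: none of these are pp-formulas over $\mA$ unless the relevant constants happen to be pp-definable, which is exactly what you may not assume. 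Your fallback (``for any fixed finite $T$ with $|T|\ge 2$, the language $(\Z;+,0,T)$ already pp-defines the $1$-in-$3$ relation on a two-element subset'') is false: for $T=\{0,1\}$ the structure $(\Z;+,T)$ has the constant endomorphism $x\mapsto 0$, every instance of its CSP is satisfied by the all-zero assignment, and the problem is trivially in P, so no finite case analysis can interpret \OneInThree\ there. What makes the lemma true is the \emph{whole} uniformly definable family, not a single member of it.

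The paper's proof supplies the missing idea by never fixing $\lambda$: it existentially quantifies $\lambda$ inside the gadget and homogenizes all constants as multiples of $\lambda$. Writing $S_\lambda=\lambda\cdot A$, $m_1=\min(A)$, $m_2=\min(A\setminus\{m_1\})$, the formula $\exists\lambda\,(x+y+z=(m_2-m_1)\lambda\wedge x+m_1\lambda\in S_\lambda\wedge y+m_1\lambda\in S_\lambda\wedge z+m_1\lambda\in S_\lambda\wedge\lambda\in\Lambda)$ is a genuine pp-formula over $\mA$ (multiplication by the fixed integers $m_1$ and $m_2-m_1$ is pp-definable from $+$, and membership in $S_\lambda$ is expressed by $\theta(\lambda,\cdot)$), and for each admissible $\lambda$ the shifted set $\lambda\cdot(A-m_1)$ has minimum $0$ and second element $\lambda(m_2-m_1)$, so the sum condition forces ``exactly one of the three is nonzero''. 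This gives the interpretation of \OneInThree\ without ever naming a constant, which is the step your argument needs and does not have; the rest of your outline (reducing to the finite case, choosing two distinguished elements) agrees with the paper, but without the $\lambda$-homogenization the construction cannot be carried out in the pp-fragment of $\mA$.
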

\begin{proof}
    Since every $S_\lambda$ is finite, one sees that $S_\lambda = \lambda\cdot A$ for the finite set $A$
    coming from the compatibility condition.
    Let $m_1:=\min(A)$ and $m_2:=\min(A\setminus\{m_1\})$.
    The formula \[\exists\lambda(x+y+z=(m_2-m_1)\lambda\land x+m_1\lambda\in S_\lambda \land y+m_1\lambda\in S_\lambda\land z+m_1\lambda \in S_\lambda \land \lambda\in\Lambda)\]
    defines the ternary relation consisting of $(a,b,c)\in\Z^3$ such that $a,b,c\in\{0, m_2-m_1\}$ and exactly one of $a,b,c$ is equal to $m_2-m_1$.
    Note that this formula is in the language of $\mA$, since $\{S_\lambda\}_{\lambda\in\Lambda}$ is uniformly pp-definable and in particular $\Lambda$ is pp-definable in $\mA$.
    This gives an interpretation of \OneInThree\ in $\mA$, using the map $h\colon \{0,m_2-m_1\}\to \{0,1\}$ such that $h(0)=0$ and $h(m_2-m_1)=1$.
    Therefore, $\Csp(\mA)$ is NP-hard.
\end{proof}

\begin{proposition}\label{prop:unary-hard}
    \blue{Let $\mA$ be a finite-signature first-order reduct of $(\Z;+,1)$ that contains $+$ and is a core.
    Let $\Family$ be a compatible family that is uniformly pp-definable in $\mA$ such that
    for every $\lambda\in\Lambda$ the set $S_\lambda$ is not Horn-definable. Then $\Csp(\mA)$ is NP-hard.}
\end{proposition}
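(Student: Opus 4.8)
The plan is to reduce $\Csp(\mA)$, in essentially one move, to an instance of the general subgroup problem of a finite cyclic group $\Z/D'\Z$ expanded by a single relation that is not a coset, which is NP-hard by Theorem~34 of~\cite{FederVardi}. Let $S$ be the binary relation defined by the pp-formula $\theta$ witnessing uniform pp-definability, so $S_\lambda=\{a:(\lambda,a)\in S\}$ and $\Lambda=\{x:\exists y\,(x,y)\in S\}$ are pp-definable in $\mA$, and write $S_\lambda=(S^\circ_\lambda\cup S^+_\lambda)\setminus S^-_\lambda$ in the canonical form provided by Proposition~\ref{prop:QE}.

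The first step is to unfold the compatibility condition into a uniform modulus. I would show that there is a single $D\geq 1$ (with $D=1$ when the $S_\lambda$ are finite) such that every $S^\circ_\lambda$ is $D$-periodic and every $\lambda\in\Lambda$ is coprime to $D$ — if some $\lambda$ shared a prime factor with $D$, applying both directions of the coset condition to a suitable modulus would force a coset to be both contained and not contained in $S^\circ_\lambda$. As a consequence, for every multiple $D'$ of $D$ that is large enough that the finitely many integers occurring in $S^+_1\cup S^-_1\cup A\cup B$ are pairwise incongruent modulo $D'$, and every $\lambda\in\Lambda$ with $\lambda\equiv 1\pmod{D'}$, the images of $S_\lambda$ and $S_1$ in $\Z/D'\Z$ coincide: the coset part $S^\circ_\lambda$ has the same residues as $S^\circ_1$ (because $\lambda\equiv 1\pmod D$); the finite extra points of $S_\lambda=(S^\circ_\lambda\cup\lambda A)\setminus\lambda B$ reduce modulo $D'$ to those of $S_1$ (because $\lambda\equiv 1\pmod{D'}$); and the finitely many deletions are invisible modulo $D'$, since every residue class meeting $S^\circ_\lambda$ meets it infinitely often.

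The crux is then the arithmetic claim that, since $S_1$ is not Horn, some such $D'$ makes $S_1\bmod D'$ fail to be a coset of a subgroup of $\Z/D'\Z$. Here one first records the characterisation that a unary first-order reduct of $(\Z;+,1)$ is Horn-definable precisely when $S^\circ_1\in\{\varnothing,\Z\}$, or $S^\circ_1$ is a single coset with $S^+_1=\varnothing$, or $S^\circ_1=\varnothing$ with $|S^+_1|\le 1$. In the complementary cases: if $S^\circ_1$ is not a single coset (and not $\Z$), then for every multiple $D'$ of its period the image of $S^\circ_1$ is the preimage of a non-coset subset of $\Z/(\text{period})\Z$, and for large $D'$ the few extra residues coming from $S^+_1$ cannot turn the union into a coset; if instead $S^\circ_1=\varnothing$ (so $|S_1|=|S^+_1|\ge 2$) or $S^\circ_1$ is a single coset (so $S^+_1\neq\varnothing$), then $|S_1\bmod D'|$ is a fixed affine function of $D'$ with positive constant term, which an elementary divisibility argument shows divides $D'$ for only finitely many $D'$; as the coset sizes in $\Z/D'\Z$ are exactly the divisors of $D'$, any sufficiently large admissible $D'$ works (chosen coprime to $|S_1|$ in the finite sub-case, which could also simply be handled by Lemma~\ref{lem:finite-nontrivial-hard}).

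Fixing such a $D'$, one concludes as in the proof of Proposition~\ref{prop:hardness-fully-modular}: the quotient structure on $\Z/D'\Z$ built from $\mA$ together with the pp-definable relations $\Lambda$ and $S$ is pp-interpretable in $\mA$ and is a core containing $+$, so after adjoining the constant $1$ for free via~\cite[Proposition~3.3]{wonderland} it pp-defines all cosets of subgroups of all powers of $\Z/D'\Z$ as well as $\{a:\exists\lambda\,(\lambda\equiv 1\bmod D'\ \wedge\ (\lambda,a)\in S)\}$, which by the second step equals $S_1\bmod D'$ and is not a coset; Theorem~34 of~\cite{FederVardi} then gives NP-hardness of this CSP, hence of $\Csp(\mA)$. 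I expect the main obstacle to be the first step — extracting from compatibility the uniform control of every $S_\lambda$ modulo $D'$, in particular ruling out that some $S^\circ_\lambda$ carries a modulus not dividing $D$ — together with the bookkeeping that passing to the quotient (and re-coring) does not re-introduce Horn-ness, which is precisely why $D'$ must be chosen with the divisibility property above.
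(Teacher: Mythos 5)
Your proposal is essentially correct, but it follows a genuinely different route from the paper's proof. The paper argues by cases: if all fibres $S_\lambda$ are finite it invokes the \OneInThree\ gadget of Lemma~\ref{lem:finite-nontrivial-hard}; if $S^\circ_1$ is a union of $n$ cosets of $d\Z$ with $2\le n\le d-1$ it extracts the purely modular part of every fibre at once via the shifted intersection $\theta(x,y)\wedge\theta(x,y+dx)\wedge\cdots\wedge\theta(x,y+\max(A\cup B)dx)$ and applies Proposition~\ref{prop:hardness-fully-modular}; and if $S^\circ_\lambda$ is a single coset with $A\neq\emptyset$ it manufactures a new family with finite non-singleton fibres via $\exists z(\theta(x,y)\wedge\theta(x,z)\wedge y+z=(c_1+a)x)$ and falls back on Lemma~\ref{lem:finite-nontrivial-hard} again. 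You instead give one uniform reduction: restrict to parameters $\lambda\equiv 1\bmod D'$ (this is where $1\in\Lambda$ is used), note that for such $\lambda$ compatibility at modulus $D'$ gives $S^\circ_\lambda=S^\circ_1$ while $\lambda A\equiv A\bmod D'$ and the deletions $\lambda B$ are invisible, so the pp-defined set in the quotient is exactly $S_1\bmod D'$; then show that non-Horn-ness of $S_1$ forces $S_1\bmod D'$ to be a non-coset for some large multiple $D'$ of a common period, and conclude with the quotient/core/constant machinery of Proposition~\ref{prop:hardness-fully-modular} and Theorem~34 of~\cite{FederVardi}. Your approach buys uniformity (no case split, no use of the \OneInThree\ gadget except as an optional fallback) at the price of running the quotient-plus-constant argument even in the finite-fibre case and of the mod-$D'$ counting; the paper's gadgets avoid counting but need three separate constructions. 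The counting itself is sound: density comparison in the several-cosets case, the affine-size divisibility argument in the single-coset-plus-points case, and, in the finite case, the fact that distinct elements of a coset in $\Z/D'\Z$ of size $|S_1|$ differ by $D'/|S_1|$, which exceeds any fixed difference for $D'$ large.

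Two inaccuracies should be repaired, though neither is fatal. First, the claim that every $\lambda\in\Lambda$ is coprime to $D$ does not follow from compatibility when $D$ is a common period and $S^\circ_1$ may be empty: the family $\Lambda=\{1,2\}$, $A=\{0,1\}$, $B=\emptyset$, $S_1=\{0,1\}$, $S_2=(1+2\Z)\cup\{0,2\}$ satisfies the definition although $2$ is not coprime to the common period $2$ (your coset-switching argument does work when $D$ is the exact period of a nonempty $S^\circ_1$). Fortunately this claim is never used downstream: everything you need is stated only for $\lambda\equiv 1\bmod D'$, where it is automatic. Second, in the finite sub-case a multiple of $D$ coprime to $|S_1|$ need not exist (same example, $D=2$, $|S_1|=2$); replace the coprimality trick by the gap argument above, and note that the fallback via Lemma~\ref{lem:finite-nontrivial-hard} requires all fibres $S_\lambda$ to be finite, not just $S_1$. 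Finally, in your Horn characterisation the first disjunct should read $S^\circ_1=\Z$ only; $S^\circ_1=\emptyset$ is Horn only together with $|S^+_1|\le 1$, as your third disjunct says.
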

\begin{proof}
    Let $A,B\subset\Z$ be finite such that $S_\lambda = (S^\circ_\lambda\cup \lambda\cdot A)\setminus(\lambda\cdot B)$ \blue{for all $\lambda \in \Lambda$}.
    \blue{Since $S_\lambda$ is not Horn-definable,
    we have $|S_\lambda|>1$ for all $\lambda  \in \Lambda$.} 
    If $S_\lambda$ is finite for every $\lambda\in\Lambda$, then $\Csp(\mA)$ is NP-hard by Lemma~\ref{lem:finite-nontrivial-hard}.
    \blue{Therefore, we can assume that $S^\circ_\lambda\neq\emptyset$ for some $\lambda\in\Lambda$, and the second compatibility condition implies that $S^{\circ}_\lambda$ is infinite for all $\lambda \in \Lambda$.}
    Let $d\geq 1$ be such that $S^\circ_\lambda$
    is a union of cosets of $d\Z$ for all $\lambda\in\Lambda$. Write $S^\circ_1 = \bigcup_{i=1}^n c_i+d\Z$, with $c_i\in\{0,\dots,d-1\}$.

    If $n\in\{2,\dots,d-1\}$, we claim that we can \blue{pp-define} a fully modular relation that is not Horn-definable.
    Indeed, let $\theta(x,y)$ be a formula that defines $\Family$.
    Note that $$\chi(x,y):=\theta(x,y)\land \theta(x,y+dx)\land\cdots\land\theta(x,y+\max(A\cup B)dx)$$ 
    holds precisely on the pairs $(\lambda,a)$
    such that $a\in S^\circ_\lambda$: since $x$ is forced to be in $\Lambda$ by $\theta$, a satisfying assignment gives a nonzero value $\lambda$ to $x$.
    Thus, if all of $y,y+d\lambda,\dots,y+\max(A\cup B)d\lambda$ are in $S_\lambda$, then they all must be in the modular part $S^\circ_\lambda$.
    The relation $T$ that $\chi$ defines is fully modular and is such that $T_\lambda = S^\circ_\lambda$ and in particular $T$ is not Horn-definable.
    It follows from Proposition~\ref{prop:hardness-fully-modular} that $\Csp(\mA)$ is NP-hard.
    
    Otherwise, the set $S^\circ_\lambda$ consists of a single coset of $d\Z$ for all $\lambda\in\Lambda$, and this coset is $\lambda c_1+d\Z$
    by the compatibility condition on $\Family$. Since $S^\circ_\lambda$ is assumed to not be Horn-definable, \blue{$A$ must contain an element $a$.}
    We claim that we can define another family of unary relations where the unary relations are finite and not singletons.
    Indeed, consider the formula
    \[\psi(x,y):= \exists z\left(\theta(x,y)\land\theta(x,z)\land y+z=(c_1+a)x\right)\]
    and let $T\subseteq\Z^2$ be the relation that it defines.
    First, note that $\psi(\lambda,c_1)$ and $\psi(\lambda,a)$ hold for all $\lambda\in\Lambda$,
    so that $|T_\lambda|>1$.
    We claim that $T_\lambda$ is finite.
    Since $A\cap S^\circ_1=\emptyset$, one has $a\neq c_1\bmod d$. Consequently, $c_1+a\neq 2c_1\bmod d$ and $(c_1+a)\lambda\neq 2c_1\lambda\bmod d$.
    The equation $y+z=(c_1+a)\lambda$ therefore forces that one of $y$ and $z$ is in $\lambda\cdot A$.
    Since $A$ is finite, there are only finitely many pairs satisfying this condition, thus showing that $1<|T_\lambda|<\infty$.
    It follows from Lemma~\ref{lem:finite-nontrivial-hard} that $\Csp(\mA)$ is NP-hard.
\end{proof}
\ignore{\begin{proof}
    The relation $R$ can be written in the form $(R'\cup A)\setminus B$, where $A,B\subset\Z$ are finite, $A\cap B=\emptyset$, and $R'$ is a disjoint union of cosets of a nontrivial subgroup of $\Z$.
    If $R'=\emptyset$, then $R$ is finite, and it cannot be a singleton because a singleton is Horn-definable.
    By Lemma~\ref{lem:finite-nontrivial-hard}, $\Csp(\Z;+,R)$ is NP-hard.
    Therefore we can assume that $R'\neq\emptyset$, and let $d\Z$ be minimal such that $R'$ is a union of cosets of $d\Z$, i.e., $R'=\bigcup_{i=1}^n c_i +d\Z$ with $c_{i}\in \{0,\dots,d-1\}$.
    Finally, we can assume $c_i\neq 0$ for all $i$, since otherwise the map $x\mapsto d\cdot x$ is an embedding of $\mA$, which implies that $R$ is in fact the coset of a subgroup of $\Z$,
    contradicting the fact that $R$ is not Horn-definable.

    Suppose that $A\neq\emptyset$. Then since $\mA$ is a core and by Lemma~\ref{lem:pp-definition-one}, $\{1\}$ or $\{-1,1\}$ is pp-definable in $\mA$.
    In the latter case we conclude with Lemma~\ref{lem:finite-nontrivial-hard} that $\Csp(\mA)$ is NP-hard,
    so let us assume now that $1$ is pp-definable in $\mA$.
    If $n=1$ then we claim we can define a finite set which is not a singleton, in which case we are by Lemma~\ref{lem:finite-nontrivial-hard}.
    Consider the formula $\theta(x):= \exists y(x,y\in R\land x+y=c+\max(A))$.
    The set defined by $\theta$ is finite and contains $\{c,\max(A)\}$: we have $\max(A)\neq c\bmod d$, so that $2c\neq c+\max(A)\bmod d$.
    This forces at least one of $x,y$ to be in $A$, and whenever one value in $A$ is fixed the other variable is also fixed.
    Consequently, the equation $x+y=c+\max(A)$ only has finitely many solutions in $R$.
    We now prove that $\Csp(\mA)$ is NP-hard if $1<n<d$.
    In this case, note that the formula $R(x)\land R(x+d)\land \cdots \land R(x+\max(A\cup B)d))$ is a pp-definition of $R'=\bigcup_{i=1}^n c_i +d\Z$
    in $\mA$. We conclude by Proposition~\ref{prop:hardness-fully-modular} that $\Csp(\mA)$ is NP-hard.
    \ignore{Consider the finite structure $\mB$ whose domain is $\Z/d\Z$ and which contains as relations $+$, all the cosets of subgroups of $(\Z/d\Z)^k$ and $T=\{[c_1],\dots,[c_n]\}$, where $[.]$ indicates equivalence class modulo $d$.
    We show how to pp-interpret $\mB$ in $\mA$. The interpretation is $1$-dimensional and the coordinate map is simply $h\colon p\mapsto [p]$.
    The relation $+$ is interpreted by $\phi_+(x, y, z):=\exists w \  x+ y= z+dw$.
    Similarly, the cosets of subgroups of $(\Z/d\Z)^k$ can be interpreted in $\mA$ (the cosets of subgroups of $(\Z/d\Z)^k$ correspond to the cosets of subgroups of $\Z^k$ containing $d\Z^k$).
    The relation $T$ is interpreted by $\phi_T(\tuple x):= (d+1)^m\tuple x\in R$ where $m$ is any integer such that $(d+1)^m>\max_{a\in A \cup B} |a|$.
    This indeed gives an interpretation of $\mB$ in $\mA$: we have $[a]+[b]=[c]$ iff there exists $k\in\Z$ such that $a+b=c+kd$, i.e., iff $\phi_+(a,b,c)$ holds in $\mA$.
    Moreover, if $[a]\in T$ then $a = c_{i} \bmod d$ and $(d+1)^m a\not\in B$ so that $(d+1)^m a\in R$.
    Conversely if $(d+1)^m a\in R$ then $a = c_i\bmod d$ for some $i$ so that $[a]\in T$.
    It follows from Theorem 34 in the bible~\cite{FederVardi} that $\Csp(\mB)$ is NP-hard, so that $\Csp(\mA)$ is NP-hard as well.}

    We can now safely assume that $A=\emptyset$, and by Lemma~\ref{lem:pp-definition-one} either $B=\emptyset$ or $R=d\Z\setminus\{0\}$.
    The latter case cannot happen because $d\Z\setminus\{0\}$ is Horn-definable.
    In the former case, we immediately obtain that $\Csp(\mA)$ is NP-hard by Proposition~\ref{prop:hardness-fully-modular}.
\end{proof}}

\blue{As a corollary we obtain a simple-to-state condition implying that $\Csp(\mA)$ is NP-hard
(Corollary~\ref{cor:structure-endos}).
The corollary relies on the fact that $\End(\mA)$, being identified \blue{with} a subset of $\Z$, can be pp-defined in $\mA$. We prove this in the next lemma.}

\begin{lemma}\label{lem:pp-def-endos}
    \blue{Let $\mA$ be a finite-signature first-order reduct of $(\Z;+,1)$ that contains $+$}.
    Then the set $\End(\mA)$ has a \blue{pp-definition in $\mA$ that is additionally quantifier-free.}
\end{lemma}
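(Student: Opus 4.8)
The plan is to reduce to a single relation and then to extract a quantifier-free pp-definition from a minimal standard defining formula, using the syntactic division operation from the proof of Lemma~\ref{lem:core}. Since the endomorphisms of $\mA$ are exactly the maps $x\mapsto\lambda x$, and such a map always preserves $+$, we have $\End(\mA)=\bigcap_R\End((\Z;+,R))$, the intersection ranging over the finitely many relations $R$ of $\mA$. A conjunction of quantifier-free pp-formulas in the single free variable $x$ is again a quantifier-free pp-formula, and $+$, every relation $R$ of $\mA$ and the constant $1$ are all at our disposal; so it suffices to show, for each relation $R$ of $\mA$ of arity $k$, that $\End((\Z;+,R))=\{\lambda\in\Z:\lambda\cdot R\subseteq R\}$ has a quantifier-free pp-definition over $(\Z;+,1,R)$.

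Fix a minimal standard $\mathcal{L}_{(\Z;+,1)}$-formula $\phi$ defining $R$, and recall that $\psi/\lambda$ is characterised by $\tuple{a}\models\psi/\lambda\iff\lambda\cdot\tuple{a}\models\psi$; hence $\lambda\in\End((\Z;+,R))$ iff $\phi\models\phi/\lambda$. The key observation is that, as $\lambda$ ranges over $\Z$, the formula $\phi/\lambda$ assumes only finitely many forms: a homogeneous linear equation is left unchanged; a linear equation with nonzero right-hand side $c$ becomes $\bot$ unless $\lambda$ divides $c$ (and then only the finitely many quotients $c/\lambda$ arise), so a negated such equation becomes $\top$ as soon as $\lambda\nmid c$; and a modular equation of modulus $d$ is turned into a modular equation of modulus dividing $d$ in a way depending only on $\lambda\bmod d$. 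Thus $\{\lambda:\phi\models\phi/\lambda\}$ is a union of finitely many residue classes modulo the least common multiple of the moduli occurring in $\phi$, modified on a finite set; in particular it is first-order definable in $(\Z;+,1)$, so by Proposition~\ref{prop:easy} it has the form $(E^\circ\cup E^+)\setminus E^-$.

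The final and main step is to realise this set by a conjunction of atoms over $\mA$, not merely over $(\Z;+,1)$. For the finite part $E^+\setminus E^-$ together with finitely many individual integers that must be removed, one uses that for a term $t(x)=nx+m$ with $n\geq 1$ and a suitable equation occurring in $\phi$, substituting $t$ into the corresponding relation of $\mA$ expresses a condition of the form $\lambda\ne m'$ (or $\lambda=m'$); the integers $m'$ reachable this way are exactly the ones controlled by the constants appearing in $\mA$'s relations, which is precisely where the finitely many exceptions live. For the infinite part $E^\circ$, governed by congruences, one uses that each relevant congruence is witnessed inside $R$: if $\sum_i a_ix_i=c\bmod d$ occurs in $\phi$ and $\tuple{v}$ is a fixed solution of $\sum_i a_iv_i=c$, then $(\lambda v_1,\dots,\lambda v_k)$ lies in the associated modular set iff $\lambda$ satisfies the derived congruence, so substituting such $\lambda$-scaled fixed tuples into the appropriate relations of $\mA$ captures $E^\circ$. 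I expect the main obstacle to be the bookkeeping here: $\phi$ has to be analysed as a whole rather than clause by clause (the dependence of $\phi/\lambda$ on $\lambda$ is not clause-wise monotone), and one must argue that the Boolean combination describing $\End((\Z;+,R))$ actually collapses to a conjunction of atoms once the relations of $\mA$ and the constant $1$ are available — using that $\End((\Z;+,R))$ is multiplicatively closed and contains $1$, which constrains its shape and rules out genuinely disjunctive behaviour outside a finite set that can then be treated directly.
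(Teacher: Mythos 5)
Your proposal has a genuine gap, in two places. First, reducing to a quantifier-free pp-definition over $(\Z;+,1,R)$ already changes the statement: the lemma asks for a pp-definition in $\mA$ itself, and the singleton $\{1\}$ is in general \emph{not} pp-definable in $\mA$ (for instance, if $\End(\mA)=1+d\Z$ with $d\geq 2$, every pp-definable set is preserved by all maps $x\mapsto\lambda x$ with $\lambda\in 1+d\Z$, so it cannot be $\{1\}$). Your later constructions genuinely use the constant: terms $t(x)=nx+m$ with $m\neq 0$ cannot be built from $+$ and the relations of $\mA$ alone. This is not cosmetic, since the lemma is used (Corollary~\ref{cor:structure-endos} and the hardness arguments) precisely to get $\End(\mA)$ pp-definable in $\mA$ without any expansion by constants.

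Second, and more fundamentally, what you call the ``final and main step'' is exactly where the content of the lemma lies, and the proposal only asserts it. The atoms available for pp-defining a subset of $\Z$ over $\mA$ are essentially conditions of the form $R(a_1x,\dots,a_rx)$, i.e.\ ``$\lambda\cdot\tuple a\in R$'' for fixed tuples $\tuple a$. When $\tuple a\in R$ every endomorphism satisfies such a condition, so the whole difficulty is the converse direction: one must exhibit \emph{finitely many} witness tuples $\tuple a\in R$ such that any $\lambda$ preserving all of them preserves all of $R$ (and all other relations). Your appeal to the multiplicative closure of $\End(\mA)$ and to its shape as a finite modification of a union of residue classes does not produce such witnesses, nor does it rule out that every finite conjunction of these atoms defines a strictly larger set; the structural description of $\{\lambda:\phi\models\phi/\lambda\}$, while plausible, is first-order definability in $(\Z;+,1)$ and falls short of pp-definability in $\mA$. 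The paper's proof supplies exactly the missing device: it fixes a standard CNF definition $\phi_R$, forms the finite family $\mathcal F$ of equations obtained from those in $\phi_R$ by letting the constants range over $|m|\leq M$ (and residues over $\{0,\dots,d-1\}$), picks for every subset of $\mathcal F$ realised by some tuple of $R$ one witness tuple, and shows that the atoms $R(b_1x,\dots,b_rx)$ for these finitely many witnesses already define $\End(\mA)$ --- the key step being that if $\lambda\tuple b$ satisfies $\sum\mu_ix_i=c$ then $\lambda$ divides $c$ and $\sum\mu_ix_i=c/\lambda$ again lies in $\mathcal F$, so any tuple $\tuple a$ of the same $\mathcal F$-type behaves like $\tuple b$. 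Some such witness-selection argument is indispensable and is absent from your proposal.
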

\begin{proof}
    Let $E$ be the set of all the formulas $R(a_1\cdot x,\dots,a_r\cdot x)$
    for $R$ in the language of $\mA$ and $(a_1,\dots,a_r)\in R$.
    We then have that $\mA\models E(\lambda)$ iff $\lambda\in \End(\mA)$.
    We now show that there exists a finite subset $F\subseteq E$ that defines the same set of integers.

    For each relation $R$ of $\mA$, fix a standard definition $\phi_R$ in conjunctive normal form of $R$ in $(\Z;+,1)$.
    Let $M$ be the largest absolute value of a constant appearing in $\phi_R$.
    Consider the finite family $\mathcal F$ of equations $\sum \mu_ix_i = m$, where $\sum\mu_ix_i = m'$ is some equation appearing in $\phi_R$
    and $|m|\leq M$, together with all the equations $\sum\mu_ix_i = c\bmod d$ where $\sum\mu_ix_i = c'\bmod d$ is a modular equation appearing in $\phi_R$
    and $c\in\{0,\dots,d-1\}$.
    For each subset of $\mathcal F$ that is satisfiable by a tuple in $R$, pick a tuple $\tuple b\in R$ satisfying the formulas in this subset and add this tuple to a set $\mathcal S$.
    Repeat this operation for every relation of $\mA$, and let $\mathcal S$ be the finite set of tuples (of possibly different arities) that we obtain.
    Finally, let $F$ be the subset of $E$ where only the formulas associated with tuples from $\mathcal S$ are kept.

    We claim that $F$ defines $\End(\mA)$.
    Since $F\subseteq E$, it suffices to show that every $\lambda$ satisfying $F$ is an endomorphism of $\mA$.
    Let $\lambda\in\Z$ satisfy $F$, and let $\tuple a \in R$ be a tuple in some relation of $\mA$.
    Let $\tuple b\in \mathcal S$ be such that $\tuple b$ satisfies exactly the same equations in $\mathcal F$ as $\tuple a$.
    By construction, $\lambda\tuple b\in R$ so that in each clause of $\phi_R$, some equation is satisfied by $\lambda \tuple b$.
    We show that $\lambda\tuple a$ satisfies the same equations, so that $\lambda\tuple a\in R$.
    If $\lambda=0$, then $\lambda\tuple b=\lambda\tuple a$ so that $\lambda\tuple a\in R$. Suppose now that $\lambda\neq 0$.
    Let $\sum\mu_ix_i = c$ be a linear equation that is satisfied by $\lambda\tuple b$.
    Then necessarily $\lambda$ divides $c$, so that $\tuple b$ satisfies $\sum\mu_ix_i = \frac{c}{\lambda}$ and $|\frac{c}{\lambda}|\leq |c| \leq M$,
    so that $\sum\mu_i x_i = \frac{c}{\lambda}$ is an equation in $\mathcal F$.
    Consequently, $\tuple a$ also satisfies this equation and $\lambda\tuple a$ satisfies $\sum\mu_ix_i = c$.
    The proof for modular linear equations is similar. This proves that $\lambda$ is an endomorphism of $\mA$ and concludes the proof.
\end{proof}

\begin{lemma}\label{lem:pp-definition-one}
    \blue{Let $R\subseteq\Z$ be first-order definable over $(\Z;+,1)$ such that $(\Z;+,R)$ is a core.
    \begin{itemize}
        \item If $R^+ \neq\emptyset$, then $\{1\}$ or $\{1,-1\}$ is pp-definable in $(\Z;+,R)$.
        \item If $R^+=\emptyset$, then $R^-=\emptyset$ or $R=\Z\setminus\{0\}$.
    \end{itemize}}
\end{lemma}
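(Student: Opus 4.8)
I would treat the two bullets separately, in both cases exploiting that $\mathfrak A:=(\Z;+,R)$ is a core. One preliminary fact is used throughout: the map $x\mapsto 0$ is never a self-embedding of $\mathfrak A$ (it does not even reflect $+$), so, $\mathfrak A$ being a core, it is not an endomorphism of $\mathfrak A$; as it is an endomorphism of $(\Z;+)$, it must fail to preserve $R$, i.e.\ $R\neq\emptyset$ and $0\notin R$. For the first bullet I would reduce to showing $\End(\mathfrak A)\subseteq\{-1,1\}$: since $1\in\End(\mathfrak A)$, this forces $\End(\mathfrak A)\in\{\{1\},\{-1,1\}\}$, and Lemma~\ref{lem:pp-def-endos} then provides a pp-definition in $\mathfrak A$ of that set, which is exactly the conclusion sought.

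\textbf{First bullet ($R^+\neq\emptyset$).} Suppose toward a contradiction that some $\lambda\in\End(\mathfrak A)$ has $|\lambda|\geq 2$, and fix a minimal standard formula $\phi=\bigwedge_j C_j$ defining $R$. Minimality excludes tautological clauses and (since an unsatisfiable clause would make $R=\emptyset$) unsatisfiable literals and clauses, and Proposition~\ref{prop:syntactic-core} forces every linear equation occurring in $\phi$ to have right-hand side $0$; hence every literal of $\phi$ is $\mu x=0$ or $\mu x\neq 0$ with $\mu\neq 0$, or a modular linear equation with non-empty solution set. Let $e\geq 1$ be the least common multiple of the moduli occurring in $\phi$ (or $e:=1$ if none occur). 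The key point: for $a\in R$ with $a\neq 0$ and any clause $C_j$, since $a$ satisfies $C_j$ and $a\neq 0$ it satisfies a literal of one of the last two kinds; if it is $\mu x\neq 0$ its solution set is $\Z\setminus\{0\}$, and if it is a modular literal its solution set is a non-empty union of residue classes modulo a divisor of $e$ and contains $a$, hence contains $a+e\Z$. Either way every element of $(a+e\Z)\setminus\{0\}$ satisfies $C_j$, and ranging over $j$ yields $(a+e\Z)\setminus\{0\}\subseteq R$. Now put $T:=\bigcup_{a\in R,\,a\neq 0}(a+e\Z)$, a finite union of cosets of the nontrivial subgroup $e\Z$; the inclusion just proved, together with $0\notin R\neq\emptyset$, gives $R=T\setminus\{0\}$. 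Since the triple $(T,\emptyset,T\cap\{0\})$ is normalized in the sense of Proposition~\ref{prop:easy} and produces $R$, uniqueness of that decomposition gives $R^{\circ}=T$ and $R^{+}=\emptyset$, contradicting $R^+\neq\emptyset$. Hence $\End(\mathfrak A)\subseteq\{-1,1\}$, and we conclude as above.

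\textbf{Second bullet ($R^+=\emptyset$).} Assume $R^-\neq\emptyset$; I would deduce $R=\Z\setminus\{0\}$. Write $R=R^{\circ}\setminus R^-$ and fix $d\geq 1$ with $R^{\circ}$ a union of cosets of $d\Z$ (possible since $\emptyset\neq R^-\subseteq R^{\circ}$). First, for $n$ large enough, $\lambda:=1+nd\in\End(\mathfrak A)$: for $a\in R$ one has $a\neq 0$, $\lambda a\equiv a\pmod d$ so $\lambda a\in R^{\circ}$, and $|\lambda a|\geq|\lambda|>\max\{|f|:f\in R^-\}$ so $\lambda a\notin R^-$, whence $\lambda a\in R$. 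As $\mathfrak A$ is a core, $x\mapsto\lambda x$ is a self-embedding, so for $f\in R^-$ (hence $f\notin R$) we get $\lambda f\notin R$, while $\lambda f\equiv f\pmod d$ gives $\lambda f\in R^{\circ}$, so $\lambda f\in R^{\circ}\setminus R=R^-$; iterating, $\lambda^k f\in R^-$ for all $k$, which, $R^-$ being finite, forces $f=0$. So $R^-=\{0\}$, hence $0\in R^{\circ}$, hence $d\Z\subseteq R^{\circ}$ and $dx\in R^{\circ}$ for every $x$. This makes $x\mapsto dx$ an endomorphism of $\mathfrak A$ (for $a\in R=R^{\circ}\setminus\{0\}$, $da$ is a nonzero element of $R^{\circ}$, hence of $R$), and its being a self-embedding gives, for every $a\neq 0$, that $da\in R$ implies $a\in R\subseteq R^{\circ}$. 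Thus $R^{\circ}=\Z$ and $R=R^{\circ}\setminus R^-=\Z\setminus\{0\}$.

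\textbf{Main obstacle.} I expect the delicate part to be the clause analysis in the first bullet: one must pin down precisely which literals survive in a minimal standard formula once Proposition~\ref{prop:syntactic-core} is invoked, and the passage from ``$(a+e\Z)\setminus\{0\}\subseteq R$ for every nonzero $a\in R$'' to ``$R$ is a periodic set with at most the point $0$ removed'' — through the auxiliary periodic set $T$ and the uniqueness of the normal form of Proposition~\ref{prop:easy} — is what ultimately contradicts $R^+\neq\emptyset$. The second bullet is comparatively routine, amounting to constructing the endomorphisms $x\mapsto(1+nd)x$ and $x\mapsto dx$ and reading off the consequences of the self-embedding property.
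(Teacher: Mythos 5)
Your proof is correct, and the two bullets relate differently to the paper's argument. For the second bullet you follow essentially the paper's route: the paper takes $b\in R^-$, uses that multiplication by $d+1$ is an endomorphism (hence, by the core assumption, a self-embedding) and pushes $b$ along its powers to force $R^-\subseteq\{0\}$; your variant with $\lambda=1+nd$ for large $n$ is the same idea and is in fact a bit more careful, since you justify why the chosen map is an endomorphism (the paper asserts this for $d+1$ without comment) and you spell out the final passage from $R^-=\{0\}$ to $R=\Z\setminus\{0\}$ via the self-embedding $x\mapsto dx$, which the paper leaves implicit. For the first bullet your route is genuinely different. Both proofs reduce, via Lemma~\ref{lem:pp-def-endos}, to showing $\End(\Z;+,R)\subseteq\{1,-1\}$, but the paper argues semantically: it takes the maximal element $a$ of $R^+$, observes $a+d\notin R$, and shows that an endomorphism $\lambda$ with $|\lambda|>1$ would put $\lambda^q(a+d)$ into $R$ for large $q$ (it lands in $R^\circ$ and escapes the finite sets $R^\pm$), contradicting the self-embedding property. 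You instead argue syntactically: Proposition~\ref{prop:syntactic-core} makes all linear equations of a minimal standard definition homogeneous once some endomorphism has $|\lambda|\geq 2$, whence every nonzero $a\in R$ drags the whole class $(a+e\Z)\setminus\{0\}$ into $R$, so $R$ is a periodic set with at most the point $0$ removed, and the uniqueness convention following Proposition~\ref{prop:easy} yields $R^+=\emptyset$, contradicting the hypothesis. Your version delivers a slightly stronger structural fact (any endomorphism of absolute value at least $2$ forces $R^+=\emptyset$) at the cost of leaning on the normal-form machinery (minimality and Proposition~\ref{prop:syntactic-core}), whereas the paper's orbit argument is more elementary and needs only the decomposition of Proposition~\ref{prop:easy}; both are valid, and your final appeal to Lemma~\ref{lem:pp-def-endos} matches the paper exactly.
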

\begin{proof}
Let $n$ be such that $R^{\circ}$ is a union of $n$ cosets of $d\Z$, i.e.,
\[ R^{\circ}=\bigcup_{i = 1}^n c_i + d\Z.\]
  Let us prove the first item.
  By Lemma~\ref{lem:pp-def-endos}, it suffices to prove that the only possible endomorphisms of the structure $(\Z;+,R)$ are $x\mapsto \lambda \cdot x$ with $\lambda \in\{1,-1\}$.
    Suppose that $x\mapsto \lambda \cdot x$ is an endomorphism. Then $\lambda\neq 0$ since the structure is a core, so suppose that $|\lambda|>1$.
    Let $a$ be the maximal element of $R^+$, and note that in particular $a+d\not\in R$ (it cannot be in $R^+$ because of the maximality assumption, and cannot be equal to any $c_i$ modulo $d$).
    Then $a\in R$, so $\lambda^q a\in R$ for all $q \in \N$. In particular, if $q$ is such that $\lambda^q>\max(R^+ \cup R^-)$ we obtain $\lambda^q a\in R^{\circ}$.
    This means that $\lambda^q a=c_i \bmod d$ for some $i\in\{1,\dots,n\}$.
    Finally, $\lambda^q (a+d) = \lambda^q a + \lambda^q d=c_i\bmod d$, so that $\lambda^q(a+d)\in R$. 
    This implies that $x\mapsto \lambda^q \cdot x$ is not an embedding, contradicting the core assumption on $(\Z;+,R)$.

    Let us now prove the second item.
    Let $b$ be some element of $R^-$.
    We must have $b=c_i \bmod d$ for some $i \in \{1,\dots,n\}$ \blue{since $R^- \subset R^{\circ}$}. 
    Note that the map $x\mapsto (d+1)x$ is an endomorphism of $(\Z;+,R)$, so it has to be an embedding.
    It follows that $(d+1)^m\cdot b\not\in R$ for any $m$. Suppose that $b$ is not $0$. Choose $m$ so that $(d+1)^m\cdot |b|>\max_{e\in R^-} |e|$
    so that $(d+1)^m\cdot b\not\in R^-$. But $(d+1)^m b = c_i\bmod d$, a contradiction.
    It follows that $R^-\subseteq\{0\}$, which concludes the proof.
\end{proof}
\begin{corollary}\label{cor:structure-endos}
    \blue{Let $\mA$ be a finite-signature first-order reduct of $(\Z;+,1)$ which contains $+$ and is a core.}
    If $\End(\mA)$ is not Horn-definable, then $\Csp(\mA)$ is NP-hard.
    Moreover, if $\End(\mA)$ is Horn-definable, then it is either $\{1\}$, $\Z\setminus\{0\}$, or $1+d\Z$ for some $d\geq 2$.
\end{corollary}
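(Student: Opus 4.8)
The plan is to analyze $E := \End(\mA)$, viewed as a subset of $\Z$. First I would record that $0 \notin E$: the constant map $x \mapsto 0$ is not a self-embedding of $\mA$ because $\mA$ contains $+$, and $\mA$ is a core, so $x \mapsto 0$ is not an endomorphism. By Lemma~\ref{lem:pp-def-endos}, $E$ has a quantifier-free pp-definition $\delta(x)$ in $\mA$. The key observation is that, since $\mA$ is a core, every map $x \mapsto \lambda x$ with $\lambda \in E$ is a self-embedding of $\mA$ and therefore preserves \emph{and reflects} the quantifier-free-pp-definable relation $E$: for all $\lambda \in E$ and $b \in \Z$ one has $\lambda b \in E \Leftrightarrow b \in E$. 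From this I would deduce that $(\Z;+,E)$ is itself a core: its endomorphisms are the maps $x \mapsto \lambda x$ with $\lambda E \subseteq E$, and since $1 \in E$ this forces $\lambda \in E$, whence each such map is a self-embedding by the key observation. Writing $E = (E^\circ \cup E^+) \setminus E^-$ as in Proposition~\ref{prop:easy}, Lemma~\ref{lem:pp-definition-one} applied to $R = E$ then gives: if $E^+ \neq \emptyset$ then $\{1\}$ or $\{1,-1\}$ is pp-definable in $(\Z;+,E)$, hence in $\mA$; and if $E^+ = \emptyset$ then $E^- = \emptyset$ or $E = \Z \setminus \{0\}$.

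For the ``moreover'' part I would use a syntactic observation about Horn clauses: a clause in one variable defines one of $\Z$, the complement of a point, a singleton, a coset, or $\emptyset$, so a Horn-definable unary relation is a singleton, the empty set, or of the form $(a + d\Z) \setminus F$ with $F$ finite; in particular one that is neither empty nor a singleton has $R^+ = \emptyset$ and $R^\circ$ a single coset. Now suppose $E$ is Horn-definable. If $E$ is a singleton, then $E = \{1\}$, as $1 \in E$. Otherwise $E$ is infinite, so $E^+ = \emptyset$ and $E^\circ$ is a single coset, and Lemma~\ref{lem:pp-definition-one} gives either $E = \Z \setminus \{0\}$, or $E^- = \emptyset$ and then $E = E^\circ = 1 + d\Z$ (the coset through $1$), with $d \geq 2$ since $0 \notin E$. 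These are the three stated possibilities.

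For the hardness part, suppose $E$ is not Horn-definable; by the characterisation above this means either $E^+ \neq \emptyset$, or $E^+ = \emptyset$ and $E^\circ$ is a union of at least two cosets of its minimal subgroup. In the first case, Lemma~\ref{lem:pp-definition-one} yields a pp-definition in $\mA$ of $\{1\}$ or of $\{1,-1\}$; put $\Lambda := \{1\}$ if $\{1\}$ is pp-definable in $\mA$, and $\Lambda := \{1,-1\}$ otherwise. Then $\Lambda$ is pp-definable, $\Lambda \subseteq E$ --- for if $-1 \notin E$ then $\{1\} = E \cap \{1,-1\}$ would already be pp-definable --- and $-1 \in \Lambda$ forces $E = -E$. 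The constant family $(S_\lambda)_{\lambda \in \Lambda}$ with $S_\lambda := E$ is uniformly pp-definable (combine $\delta(y)$ with the pp-definition of $\Lambda$), and it is compatible with $A := E^+$ and $B := E^-$; the only step needing care is $\lambda = -1$, where $E = -E$ and uniqueness of the decomposition of $E$ give $-E^\circ = E^\circ$, $-E^+ = E^+$, $-E^- = E^-$. Since each $S_\lambda = E$ is not Horn-definable, Proposition~\ref{prop:unary-hard} gives that $\Csp(\mA)$ is NP-hard. In the second case, Lemma~\ref{lem:pp-definition-one} additionally forces $E^- = \emptyset$ (its alternative $E = \Z \setminus \{0\}$ is Horn-definable), so $E = E^\circ$ is a disjoint union of at least two cosets of some $d\Z$, i.e.\ a fully modular relation that is not Horn-definable. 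Let $\mA'$ be the expansion of $\mA$ by the relation $E$; since $E$ is pp-definable in $\mA$, $\Csp(\mA')$ reduces in polynomial time to $\Csp(\mA)$, and since $\mA$ is a core while the maps $x \mapsto \lambda x$ ($\lambda \in E$) reflect $E$ by the key observation, $\mA'$ is again a core. So Proposition~\ref{prop:hardness-fully-modular} applies to $\mA'$, giving that $\Csp(\mA')$ is NP-complete, and hence $\Csp(\mA)$ is NP-hard.

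The step I expect to need the most care is the key observation --- that multiplication by an element of $E$ not only preserves but \emph{reflects} $E$ --- together with the resulting verifications that $(\Z;+,E)$ and $\mA'$ are cores; once this is in place, the classification of the Horn case and both hardness cases are routine applications of Lemmas~\ref{lem:pp-def-endos} and~\ref{lem:pp-definition-one} and of Propositions~\ref{prop:unary-hard} and~\ref{prop:hardness-fully-modular}.
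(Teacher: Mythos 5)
Your proposal is correct and takes essentially the same route as the paper: pp-definability of $\End(\mA)$ via Lemma~\ref{lem:pp-def-endos}, the observation that $(\Z;+,\End(\mA))$ is a core because self-embeddings reflect the quantifier-free definition, then Lemma~\ref{lem:pp-definition-one} feeding into Proposition~\ref{prop:unary-hard} (when $E^+\neq\emptyset$) and Proposition~\ref{prop:hardness-fully-modular} (when $E^+=\emptyset$), plus the same case analysis for the Horn case. The only deviations are cosmetic: in the $E^+\neq\emptyset$ branch the paper first pins down $\End(\mA)=\{1,-1\}$ and uses that two-element family, whereas you use the constant family $S_\lambda=E$ over $\Lambda\in\{\{1\},\{1,-1\}\}$, and you are in fact slightly more careful than the paper in spelling out the Horn characterisation of unary relations and in expanding $\mA$ by the pp-definable relation $E$ (checking the expansion is still a core) before invoking Proposition~\ref{prop:hardness-fully-modular}.
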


\ignore{\begin{proof}
    We first prove that $(\Z;+,\End(\mA))$ is a core.
    Let $x\mapsto \mu\cdot x$ be an endomorphism of $(\Z;+,\End(\mA))$.
    Then $x\mapsto \mu\cdot x$ must be an endomorphism of $\mA$: indeed, $1\in \End(\mA)$ so that $\mu\in \End(\mA)$.
    Since $\mA$ is a core, $x\mapsto k'\cdot x$ is a self-embedding of $\mA$, and therefore a self-embedding of $(\Z;+,\End(\mA))$
    since $\End(\mA)$ is pp-definable in $\mA$.
    
    Suppose now that $\End(\mA)$ is not Horn-definable.
    By Lemma~\ref{lem:pp-definition-one}, we obtain that $\End(\mA)$ can only be $\{1,-1\}$ or the union of cosets of $d\Z$, for some $d\geq 2$. 
    In these two cases, the family $\{\End(\mA)\}_{\lambda\in\End(\mA)}$ is clearly uniformly definable and satisfies the compatibility condition,
so that $\Csp(\mA)$ is NP-hard by Proposition~\ref{prop:unary-hard}.

    Finally, suppose that $\End(\mA)$ is Horn-definable.
    Then it is either a singleton, in which case it is $\{1\}$, or a coset $1+d\Z$ (for $d\geq 1$) from which a finite set $B$ of integers is removed.
    By Lemma~\ref{lem:pp-definition-one}, either $B=\emptyset$ (and then $d\geq 2$ because $0\not\in \End(\mA)$) or $\End(\mA)=\Z\setminus\{0\}$, which concludes the proof.
\end{proof}}
\begin{proof}
\blue{Lemma~\ref{lem:pp-def-endos} implies
that $R := \End(\mA)$ has a quantifier-free pp-definition in $\mA$. 
    We first prove that $(\Z;+,R)$ is a core.
    Indeed, let $\lambda$ be an endomorphism of $(\Z;+,R)$.
    Since $1 \in R$, we obtain that $\lambda \in R$, so that $x \mapsto \lambda x$ is a self-embedding of $\mA$
    by the fact that $\mA$ is a core.
    Since $R$ is has a quantifier-free definition over $\mA$, it follows that $x \mapsto \lambda x$ is also a self-embedding of $(\Z;+,R)$.}

\blue{First consider the case that $R$ is not Horn-definable. If $R^+ \neq \emptyset$ then 
Lemma~\ref{lem:pp-definition-one} implies that 
$\{1\}$ or $\{1,-1\}$ are pp-definable in $(\Z;+,R)$. 
All endomorphisms of $\mA$ must preserve this set, so $\End(\mA) = R = \{1,-1\}$ or
$\End(\mA) = R = \{1\}$; since $R$ is no Horn-definable, we must even have $R = \{1,-1\}$. 
But then the family $\{S_\lambda\}_{\lambda \in \{-1,1\}}$ with $S_{-1} = S_1 = \{1,-1\}$ is uniformly definable and compatible, the conditions being satisfied for $A = \{-1,1\}$ and
$B = \emptyset$: 
\begin{itemize}
\item $S_{1} =  A = S_{-1}$;
\item $S_1^{\circ} = S_{-1}^{\circ} = \emptyset$. 
\end{itemize}
Then Proposition~\ref{prop:unary-hard}
applied to $S_\lambda$ implies that $\Csp(\mA)$ is NP-hard. }

\blue{If $R^+ = \emptyset$ then Lemma~\ref{lem:pp-definition-one} implies that $R^- = \emptyset$ or 
that $R = \Z \setminus \{0\}$. In the latter case, $R$ would be Horn, contrary to the assumptions, so $R^{-} = \emptyset$. In this case, $R$ is fully modular, but not Horn definable, so NP-hardness
of $\Csp(\mA)$ follows from Proposition~\ref{prop:hardness-fully-modular}. 
This shows the first part of the statement.
}
 
 \blue{Finally, consider the case that $R$ is Horn-definable. If $R^+ = \emptyset$ then 
 Lemma~\ref{lem:pp-definition-one} implies that 
$R = \Z \setminus \{0\}$, and we are done, or $R^{-} = \emptyset$, in which case $R = 1 + d{\Z}$ for some $d \geq 2$ and we are also done.
Otherwise, $R^+ \neq \emptyset$ and 
 Lemma~\ref{lem:pp-definition-one} implies that 
 $\{1\}$ or $\{1,-1\}$ is pp-definable in $(\Z;+,R)$. 
}
\end{proof}

\subsection{Arbitrary arities}
We finally present the hardness proof in the general case where the structure contains a relation that is not Horn-definable. 
The strategy is to cut from a non-Horn relation $R$ a uniformly definable family $\Family$ of lines for which each $S_\lambda$ is not Horn-definable.
In a second step, we ensure that we get a family satisfying the compatibility condition, and we conclude using Proposition~\ref{prop:unary-hard}.
Call a formula $\phi$ in conjunctive normal form \emph{reduced} if removing any literal or clause from $\phi$
yields a formula that is not equivalent to $\phi$.
Note that minimal formulas are necessarily reduced.

\begin{lemma}\label{lem:step-1}
    \blue{Let $\mA$ be a finite-signature first-order reduct of $(\Z;+,1)$ which contains $+$ and is a core.
    Suppose that $\mA$ contains a relation $R$ that is not Horn-definable.}
    Then $\Csp(\mA)$ is NP-hard, or $\mA$ pp-defines a relation that is not Horn-definable and
    that has a minimal definition containing a non-Horn clause $\psi$ such that:
    \begin{itemize}
        \item no negated linear equation is in $\psi$,
        \item at least one linear equation is in $\psi$.
    \end{itemize}
\end{lemma}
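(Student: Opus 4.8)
The plan is to pass to a non-Horn relation $R^*$ that is \emph{minimal} in the lexicographic order on standard formulas fixed above, taken within a class of pp-definable relations closed under conjoining linear equations; to inspect a non-Horn clause of its minimal standard formula; and to show that either that clause already has the shape required by the lemma, or else $R^*$ decomposes as the intersection of a Horn-definable relation with a fully modular one, from which NP-hardness follows via Proposition~\ref{prop:hardness-fully-modular}.

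First, if $\End(\mA)$ is not Horn-definable, then $\Csp(\mA)$ is NP-hard by Corollary~\ref{cor:structure-endos} and we are done; so assume $\End(\mA)$ is Horn-definable, hence equal to $\{1\}$, $\Z\setminus\{0\}$, or $1+d\Z$ for some $d\ge 2$, and in particular $0\notin\End(\mA)$. When $\End(\mA)=\{1\}$, Lemma~\ref{lem:pp-def-endos} shows that $\{1\}$, hence every integer constant and every linear equation, is pp-definable in $\mA$. Let $\mathcal C$ be the class of relations obtained from the relations of $\mA$ by conjoining finitely many linear equations that are pp-definable in $\mA$ (every homogeneous linear equation is such, using $+$ only, and when $\End(\mA)=\{1\}$ so is every linear equation). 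Every $S\in\mathcal C$ is pp-definable in $\mA$ and is reflected by each $x\mapsto\lambda x$, $\lambda\in\End(\mA)$ (a relation of $\mA$ is reflected because $\mA$ is a core, and conjoining a homogeneous equation preserves this since $\lambda\ne 0$); hence $(\mA,S)$ is a core and Proposition~\ref{prop:syntactic-core} applies to it. Consequently every linear equation in a minimal standard formula of $S$ is homogeneous (unless $\End(\mA)=\{1\}$) and therefore pp-definable, every modulus there is coprime to every element of $\End(\mA)$, and $S\wedge E\in\mathcal C$ for every linear equation $E$ occurring in such a formula.

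Now let $R^*\in\mathcal C$ be non-Horn and minimal in the lexicographic order applied to its minimal standard formula $\phi$, taken in conjunctive normal form (such $R^*$ exists since $\mathcal C$ contains the given non-Horn relation $R$); then $\phi$ is reduced, and since $R^*$ is not Horn-definable it has a non-Horn clause, i.e.\ one with at least two positive literals. If some non-Horn clause of $\phi$ contains no negated linear equation and at least one positive linear equation, then $R^*$ witnesses the second alternative and we are done. Otherwise, every non-Horn clause of $\phi$ either contains a negated linear equation or has all of its positive literals modular. For each non-Horn clause $C$ of the former kind fix a negated linear equation $\neg E_C$ in it; deleting this literal from $C$ and adding $E_C$ as a unit clause yields a standard formula for $R^*\wedge E_C$ (which lies in $\mathcal C$) with the same number of non-Horn clauses as $\phi$ but strictly fewer literals in clauses of size at least two, so $R^*\wedge E_C$ has strictly smaller measure than $R^*$; by minimality it is therefore Horn-definable, say by a Horn formula $\psi_C$. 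Since $\neg E_C\vee\psi_C$ is again a conjunction of Horn clauses, and since one checks directly that $R^*$ equals the intersection of the relations defined by the Horn clauses of $\phi$, by all the $\neg E_C\vee\psi_C$, and by the purely positive non-Horn clauses of $\phi$, we obtain $R^*=H\cap\mu$ with $H$ Horn-definable and $\mu$ defined by the purely positive non-Horn clauses of $\phi$. By the case assumption none of the latter contains a linear equation, so $\mu$ is fully modular; and as $R^*$ is not Horn-definable while $H$ is, $\mu$ is not Horn-definable either, so in particular there is at least one such clause.

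Finally, let $d\ge 2$ be a common modulus for all modular equations occurring in the purely positive non-Horn clauses of $\phi$, so that $\mu$ depends only on the residues of its arguments modulo $d$, and put $\rho:=\{\bar x\mid\exists\bar y\,(R^*(\bar y)\wedge\textstyle\bigwedge_i x_i\equiv y_i\bmod d)\}$, which is pp-definable in $\mA$ and, being a union of residue classes modulo $d$, is fully modular. From $R^*\subseteq\rho$ and the fact that $\mu$ depends only on residues modulo $d$ one deduces $R^*=H\cap\rho$, so $\rho$ is not Horn-definable. Since $d$ is coprime to every element of $\End(\mA)$, the residue set of $\rho$ modulo $d$ is invariant under multiplication by $\End(\mA)$, whence $(\mA,\rho)$ is a core; Proposition~\ref{prop:hardness-fully-modular} applied to $(\mA,\rho)$ — a finite-signature core, first-order definable in $(\Z;+,1)$, containing $+$, with the fully modular non-Horn relation $\rho$ — then shows that $\Csp(\mA,\rho)$, and hence $\Csp(\mA)$, is NP-complete. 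I expect the main obstacle to be this middle step: arranging the minimality argument so that conjoining the equations $E_C$ stays inside a class on which Proposition~\ref{prop:syntactic-core} is available, and establishing the decompositions $R^*=H\cap\mu$ and $R^*=H\cap\rho$ precisely enough to transfer non-Horn-ness from $R^*$ to the fully modular relation $\rho$.
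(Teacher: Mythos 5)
Your proof is correct, and while it shares the paper's skeleton---pass to a relation whose minimal standard formula is least in the lexicographic order among pp-definable non-Horn relations, use Proposition~\ref{prop:syntactic-core} to make the linear equations occurring there pp-definable (homogeneous, or $\{1\}$ pp-definable), conjoin the equation of a negated literal to eliminate negated linear equations by minimality, and fall back on fully modular hardness when only positive modular literals remain---your execution of the second half is genuinely different. The paper fixes a single non-Horn clause $\psi$, passes to the quotient $\mA/d\mA$, argues (again by minimality) that the image of $R$ is not a coset, and needs the additional step that $(\mA/d\mA,1)$ is pp-interpretable before invoking Feder--Vardi as in Proposition~\ref{prop:hardness-fully-modular}. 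You instead treat all non-Horn clauses simultaneously: by minimality each $R^*\wedge E_C$ is Horn-definable, which gives the decomposition $R^*=H\cap\mu=H\cap\rho$ with $H$ Horn and $\rho$ the mod-$d$ saturation of $R^*$, a pp-definable fully modular relation over $\Z$ itself; non-Horn-ness of $\rho$ is inherited from $R^*$, the expansion $(\mA,\rho)$ is a core by coprimality of $d$ with $\End(\mA)$, and Proposition~\ref{prop:hardness-fully-modular} applies directly, with no quotient structure and no interpretation of the constant needed. Your version also makes explicit what the paper's phrase ``in contradiction to the minimality of $\phi$'' leaves implicit, namely that minimality must be taken over a class of pp-definable relations closed under conjoining the equations $E_C$. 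Two small points should be stated: take $d$ to be the least common multiple of the moduli occurring in the purely positive non-Horn clauses, so that coprimality with every $\lambda\in\End(\mA)$, and $d\geq 2$ (literals modulo $1$ cannot occur in a minimal formula), are inherited from Proposition~\ref{prop:syntactic-core}; and in the case $\End(\mA)=\{1\}$ the reflection argument for members of your class $\mathcal C$ is trivial, since the identity is the only endomorphism, which is what licenses conjoining non-homogeneous equations there.
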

\begin{proof}
Let $\phi$ be a standard minimal definition of $R$ in conjunctive normal form,
and let $\psi$ be a clause of $\phi$ that is not Horn.
From Corollary~\ref{cor:structure-endos}, we can suppose that $\End(\mA)$ is $\{1\}$, $\Z\setminus\{0\}$, or $1+d\Z$ for $d\geq 2$.
This implies that either $\{1\}$ is pp-definable or, by Proposition~\ref{prop:syntactic-core}, all the linear equations appearing in $\phi$ are homogeneous.

We can assume that $\psi$ does not contain any negative literal, per the assumption that $\phi$ is \blue{minimal}: 
indeed, 
\blue{consider the relation $R'$ defined by the formula
\begin{equation}\label{eq:not-Horn}
\phi' := \phi\land \sum\lambda_i x_i=c\tag{$\dagger$}
\end{equation} where $\sum\lambda_ix_i\neq c$ is in $\psi$. Either $c=0$,
in which case the relation $R'$ defined by (\ref{eq:not-Horn}) is pp-definable in $\mA$, or $\{1\}$ is pp-definable in $\mA$ and $R'$ is pp-definable in $\mA$, too. 
The relation $R'$ is not Horn-definable, and when we reduce the definition $\phi'$ of $R'$ we obtain a formula that has fewer literals in clauses that contain more than one literal, in contradiction to the minimality of $\phi$.} 

\blue{If $\psi$ contains a linear equation then we are done. Otherwise, $\psi$ only contains modular linear equations. If $\End(\mA) = \Z\setminus\{0\}$ then by Proposition~\ref{prop:syntactic-core} any modulus of a modular linear equation appearing in $\psi$ would have to be coprime with every nonzero integer, which is impossible. Therefore,} $\End(\mA)$ is $\{1\}$ or $1+d\Z$ for $d\geq 2$.
In the latter case, we can assume by Proposition~\ref{prop:syntactic-core} that all the modular linear equations in $\psi$ are modulo a divisor of $d$.
In the former case, let $d$ be a common multiple of all the moduli appearing in a modular linear equation in $\psi$.
Consider the structure $\mA/d\mA$ defined in Proposition~\ref{prop:hardness-fully-modular}.
\blue{The relation $T$ obtained from $R$ in this structure is not a coset of a subgroup $H$ of $(\Z/d\Z)^k$  (where $k$ is the arity of $R$): otherwise this coset is definable by a conjunction $\theta$ of modular linear equations modulo a divisor of $d$. Replacing $\psi$ by $\theta$ in $\phi$ would produce a smaller definition of $R$, a contradiction to the minimality of $\phi$.}
Moreover, $(\mA/d\mA,1)$ is pp-interpretable in $\mA$: in the two cases that $\End(\mA)=\{1\}$  and $\End(\mA)=1+d\Z$, the preimage of $\{1\}$
under the canonical projection \blue{$x \mapsto x\bmod d$} is pp-definable in $\mA$.
We conclude as in Proposition~\ref{prop:hardness-fully-modular} that $\Csp(\mA)$ is NP-hard.
\end{proof}

\begin{theorem}\label{thm:not-Horn-hard}
    \blue{Let $\mA$ be a finite-signature first-order reduct of $(\Z;+,1)$ 
    which contains $+$ and is a core.}
    Suppose that $\mA$ contains a relation that is not Horn-definable.
    Then $\Csp(\mA)$ is NP-hard.
\end{theorem}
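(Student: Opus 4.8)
The plan is to follow the route announced at the start of this subsection. By Corollary~\ref{cor:structure-endos} we may assume that $\End(\mA)$ is Horn-definable, hence equal to $\{1\}$, $\Z\setminus\{0\}$, or $1+d_0\Z$ for some $d_0\geq 2$; by Lemma~\ref{lem:pp-def-endos} this set is quantifier-free pp-definable in $\mA$. By Lemma~\ref{lem:step-1} we may further assume that $\mA$ pp-defines a relation $R\subseteq\Z^k$ that is not Horn-definable and has a minimal standard definition $\phi=\psi\wedge\phi'$ in conjunctive normal form, in which $\psi=e_1\vee\cdots\vee e_p\vee m_1\vee\cdots\vee m_q$ is a non-Horn clause with the $e_i$ linear equations, the $m_j$ modular linear equations, $p\geq 1$, and $p+q\geq 2$. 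Since $R$ is pp-definable in $\mA$ it suffices to prove $\Csp(\mA,R)$ NP-hard, so we treat $R$ as a relation of $\mA$: the structure stays a finite-signature core containing $+$ with unchanged endomorphism set. Proposition~\ref{prop:syntactic-core} then gives that either $\End(\mA)=\{1\}$ and $\{1\}$ is pp-definable in $\mA$, or $\End(\mA)\neq\{1\}$, in which case every linear equation in $\phi$ is homogeneous and every modulus in $\phi$ is coprime with each element of $\End(\mA)$; in particular, if $\End(\mA)=\Z\setminus\{0\}$ then $\phi$ has no nontrivial modular equation, whence $q=0$ and $p\geq 2$, and if $\End(\mA)=1+d_0\Z$ we may take every modulus in $\phi$ to divide $d_0$.

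Next I would cut a uniformly pp-definable family of lines out of $R$. As $\phi$ is minimal it is reduced, so for each literal $\ell$ of $\psi$ there is a tuple $\tuple w_\ell\in R$ satisfying $\phi$ for which, among the literals of $\psi$, only $\ell$ holds, and there is a tuple $\tuple w\notin R$ satisfying every clause of $\phi'$ but no literal of $\psi$. Using these witnesses I would fix tuples $\tuple u,\tuple v\in\Z^k$ and put
\[
  \theta(x,y)\;:=\;\exists z_1\cdots\exists z_k\,\Bigl(R(z_1,\dots,z_k)\wedge\bigwedge_{i=1}^{k}\bigl(z_i=v_i\cdot x+u_i\cdot y\bigr)\Bigr)\wedge x\in\End(\mA),
\]
a pp-formula of $\mA$ (the terms $v_i\cdot x+u_i\cdot y$ are pp-definable from $+$) which uniformly pp-defines $\{S_\lambda\}_{\lambda\in\Lambda}$ with $\Lambda:=\End(\mA)$ and $S_\lambda=\{a\in\Z\mid \lambda\tuple v+a\tuple u\in R\}$. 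The base point $\tuple v$ and direction $\tuple u$ are to be chosen so that $S_1$ inherits the non-Horn shape of $\psi$: if $p\geq 2$, routing the line through $\tuple w_{e_1}$ and $\tuple w_{e_2}$ makes $S_1$ finite with $|S_1|\geq 2$ (each $e_i$ contributes at most one point and $e_1,e_2$ contribute two distinct points of $R$); if $p=1$ (and so $q\geq 1$), routing the line through $\tuple w_{e_1}$ and $\tuple w_{m_1}$ makes $S_1$ a genuinely nontrivial union of cosets plus the extra point coming from $e_1$, again not Horn-definable. In both cases one uses the witness $\tuple w$ to ensure that the clauses of $\phi'$ do not spoil this, so that $R$ restricted to the line is precisely $\psi$ restricted to it.

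It then remains to make the family compatible and apply Proposition~\ref{prop:unary-hard}; a raw line restriction need not be compatible, as the example preceding Lemma~\ref{lem:finite-nontrivial-hard} shows. I would split according to the shape of $\End(\mA)$. If $\End(\mA)=\{1\}$, then $\Lambda=\{1\}$ and compatibility is automatic. If $\End(\mA)=\Z\setminus\{0\}$, then $S_1$ is finite and, all linear equations in $\phi$ being homogeneous, the defining conditions of $S_\lambda$ are the $\lambda$-scalings of those of $S_1$, so $S_\lambda=\lambda\cdot S_1$ and the family is compatible with $A=S_1$, $B=\emptyset$, $S_\lambda^\circ=\emptyset$. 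If $\End(\mA)=1+d_0\Z$, then every linear equation in $\phi$ is homogeneous and every modulus $d$ divides $d_0$, so $\lambda=1\bmod d$ for every such $d$; one checks that the finite part of $S_\lambda$ is $\lambda$ times the finite part of $S_1$ and that the modular parts of $S_1$ and $S_\lambda$ correspond under multiplication by $\lambda$, so the family is compatible. If along the way the finite perturbations of the $S_\lambda$ do not yet scale correctly, they can be trimmed by conjoining shifted copies of $\theta$, exactly as $\chi$ is obtained from $\theta$ in the proof of Proposition~\ref{prop:unary-hard}. In every case each $S_\lambda$ has the same shape as $S_1$ and hence is not Horn-definable, so Proposition~\ref{prop:unary-hard} yields that $\Csp(\mA)$ is NP-hard.

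I expect the main obstacle to be the two choices hidden in the middle step: selecting $\tuple u$ and $\tuple v$ so that the restriction of $R$ to the line still exhibits the non-Horn pattern of $\psi$ rather than collapsing to a Horn relation once the clauses of $\phi'$ are accounted for, and arranging that the resulting uniformly pp-definable family genuinely satisfies the compatibility conditions (the finite perturbations scaling as $\lambda A$ and $\lambda B$, and the modular parts of $S_1$ and $S_\lambda$ matching under multiplication by $\lambda$). Everything else is an assembly of Corollary~\ref{cor:structure-endos}, Lemma~\ref{lem:step-1}, Proposition~\ref{prop:syntactic-core} and Proposition~\ref{prop:unary-hard}.
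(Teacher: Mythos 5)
Your overall skeleton is the one the paper itself announces (cut a uniformly pp-definable family of lines out of a non-Horn relation via Lemma~\ref{lem:step-1} and Corollary~\ref{cor:structure-endos}, force compatibility, conclude with Proposition~\ref{prop:unary-hard}), but the two steps you yourself identify as the main obstacles are exactly where the proposal breaks, and the fixes you sketch do not work. First, in the case $p\ge 2$ your claim that routing the line through $\tuple w_{e_1}$ and $\tuple w_{e_2}$ makes $S_1$ finite is false when $q\ge 1$: a modular equation of $\psi$ restricted to the line is a coset of some $d''\Z$ in the parameter (possibly nonempty), so $S_1$ may be infinite. Non-Horn-ness of $S_1$ can still be rescued (all parameters $\equiv 0$ modulo the lcm of the moduli avoid every modular literal of $\psi$, up to finitely many exceptions from the linear literals), but that is not the argument you give, and, more importantly, you never prove that $S_\lambda$ is non-Horn for the \emph{other} $\lambda\in\End(\mA)$ --- ``each $S_\lambda$ has the same shape as $S_1$'' is an assertion, not a proof. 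The paper handles this by a specific choice of line (direction $\tuple a-\tuple b$ where $\tuple a$ satisfies \emph{only} the linear equation $(L)$ of $\psi$, so that for $\ell\equiv 1$ modulo all moduli the modular literals evaluate as at $\tuple a$ and fail) combined with the identity $S_\lambda\cap\lambda\Z=\lambda\cdot S_1$, which follows from core-ness, and the fact that $0,\lambda\in S_\lambda$.

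Second, and more seriously, the compatibility step. Your claims that $S_\lambda=\lambda\cdot S_1$ when $\End(\mA)=\Z\setminus\{0\}$, and that ``the finite part of $S_\lambda$ is $\lambda$ times the finite part of $S_1$'' when $\End(\mA)=1+d_0\Z$, are false in general: homogeneity only yields $S_\lambda\cap\lambda\Z=\lambda\cdot S_1$, while $S_\lambda$ can contain exceptional points of the form $-a\lambda/b$ coming from an equation $ax+by=0$ in a definition of $S$ --- precisely the phenomenon of the example before Lemma~\ref{lem:finite-nontrivial-hard}, where $S_\lambda=\{\lambda,\frac{\lambda}{2}\}$ for even $\lambda$. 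This is the crux that the paper resolves by a reparametrization: take a minimal reduced standard definition $\theta(x,y)$ of $S$, let $m$ be the lcm of the $|b|$ over equations $ax+by=0$ in $\theta$, and replace $\theta(x,y)$ by $\theta(mx,y)$, after which the exceptional points scale as $\lambda\cdot P$ and $\lambda\cdot Q$ with $P,Q$ fixed. Nothing of this kind appears in your proposal, and your fallback --- conjoining shifted copies of $\theta$ as $\chi$ is built in the proof of Proposition~\ref{prop:unary-hard} --- does not repair it: that trick erases the finite part and retains only $S^\circ_\lambda$, which in your purely linear case ($\End(\mA)=\Z\setminus\{0\}$, $S_\lambda$ finite) yields the empty relation and in general yields a fully modular relation that may be Horn (e.g.\ a single coset), destroying the non-Horn witnesses you need for Proposition~\ref{prop:unary-hard}. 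Finally, the modular half of the compatibility condition ($c+d\Z\subseteq S^\circ_1\Leftrightarrow\lambda c+d\Z\subseteq S^\circ_\lambda$) requires the coprimality argument via Proposition~\ref{prop:syntactic-core} and core-ness, which you only gesture at. So while the high-level route matches the paper, both load-bearing steps are missing or incorrect as written.
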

\begin{proof}
From Lemma~\ref{lem:step-1}, we can suppose that $\mA$ pp-defines a relation $R$ that is not Horn-definable,
that has a reduced standard definition $\phi$ containing a non-Horn clause $\psi$ with at least one linear equation $(L)$ and no negated linear equation.
Since $\psi$ is not Horn, it contains at least another equation $(L')$, possibly modular.
Let $(a_1,\dots,a_n)$ satisfy $\phi$ and only $(L)$ in $\psi$. Such a tuple exists by the assumption that $\phi$ is reduced.
Similarly, let $(b_1,\dots,b_n)$ satisfy $\phi$ and only $(L')$ in $\psi$.
Let $S$ be the binary relation such that $(\lambda,t)\in S$ if, and only if,
$\lambda\in\End(\mA)$ and $t(\tuple a-\tuple b)+\lambda\tuple b$ is in $R$.
Note that $\tuple a$ and $\tuple b$ being fixed, $S$ is pp-definable over $\mA$.
Therefore, we obtain a family $\Family$ that is uniformly definable in $\mA$, where $\Lambda=\End(\mA)$. \blue{Clearly, $1 \in \Lambda$, and $\Lambda \subseteq \Z \setminus \{0\}$ and $0,\lambda\in S_\lambda$}. Moreover, note that 
\begin{equation}\label{eq:t1-tk}
    S_\lambda\cap \lambda\cdot\Z = \lambda\cdot S_1\tag{$\ddag$}
\end{equation}
holds for all $\lambda\in \End(\mA)$.
Indeed:
\begin{align*}
    t\in S_1 &\Leftrightarrow t(\tuple a-\tuple b)+\tuple b\in R\\
             &\Leftrightarrow \lambda t(\tuple a-\tuple b)+\lambda\tuple b\in R & \text{because $\mA$ is a core}\\
             &\Leftrightarrow \lambda t\in S_\lambda.
\end{align*}

We prove that for all $\lambda\in \End(\mA)$ the relation $S_\lambda$ is not Horn-definable.
Since $0,\lambda\in S_\lambda$, it suffices to prove that $S_\lambda$ omits infinitely many multiples of $\lambda$,
and by (\ref{eq:t1-tk}) it suffices to prove that $\ell\not\in S_1$ for infinitely many $\ell$.
Let $\ell$ be such that $\ell=1\bmod d'$, for every modulus $d'$ appearing in $\psi$.
\blue{We claim} that $\ell(\tuple a-\tuple b) + \tuple b$ does not satisfy any modular linear equation in $\psi$. Indeed, let $\sum_i\sigma_ix_i=c\bmod d'$ be such a
modular linear equation. Then we have
\begin{align*}
    \sum_i\sigma_i(\ell(a_i-b_i) + b_i) = c\bmod d' &\Leftrightarrow \sum_i\sigma_i a_i = c\bmod d',
\end{align*}
which is a contradiction to the choice of $\tuple a$ since $\tuple a$
only satisfies $(L)$ in $\psi$ and $(L)$ is assumed to be non-modular.
Consider now a linear equation $\sum\sigma_ix_i=c$ in $\psi$. \blue{This equation is satisfied by $\ell(\tuple a-\tuple b)+\tuple b$ if, and only if
\begin{equation}\label{eq:not-Horn-hard}
    \ell\cdot\sum_i \sigma_i (a_i-b_i) = c - \sum_i \sigma_i b_i \, . \tag{$\star$}
\end{equation}
Suppose first that $\sum\sigma_i(a_i-b_i) = 0$. Then (\ref{eq:not-Horn-hard}) is satisfied if, and only if, we have $\sum\sigma_i b_i = c=\sum\sigma_i a_i$.
This implies that both $\tuple a$ and $\tuple b$ satisfy the equation; this is a contradiction to our choice of the vectors $\tuple a$ and $\tuple b$,
so that $\ell(\tuple a-\tuple b) + \tuple b$ does not satisfy (\ref{eq:not-Horn-hard}).
Suppose now that $\sum\sigma_i(a_i-b_i)\neq 0$.
If $\ell> |c - \sum\sigma_i b_i|$, it is then clear that (\ref{eq:not-Horn-hard}) is not satisfied.
Therefore, for infinitely many $\ell$, the tuple $\ell(\tuple a-\tuple b) + \tuple b$
does not satisfy any literal in $\psi$ and $\ell\not\in S_1$.}

\blue{Let $\theta(x,y)$ be a minimal reduced standard definition of $S$}.
By inspection of the formula $\theta$, one finds that
$S^+_\lambda$ and $S^-_\lambda$
consist of points of the form $-\frac{a\lambda}{b}$, where $ax+by=0$ is an equation in $\theta$.
Note that since $a$ and $b$ are taken coprime by the minimality of $\theta$, if $b$ divides $a\lambda$ then $b$ divides $\lambda$.
Let $m:=\text{lcm}\{|b| : ax+by=0\text{ is an equation in $\theta$}\}$, and note that \blue{$m\Z\cap \Lambda$} is not empty by the previous remark.
Let $T\neq\emptyset$ be the binary relation defined by $\theta(m\cdot x, y)$. 
\blue{For all $\lambda \in \End(\mA)$, the set} $T_\lambda$ is not Horn-definable and of the form $(T^\circ_\lambda\cup (\lambda\cdot P))\setminus (\lambda\cdot Q)$,
where $P$ and $Q$ are finite sets that are independent of $\lambda$ and $T^\circ_\lambda=S^\circ_{m\lambda}$.
For the family of relations $\{T_\lambda\}_{\lambda\in\End(\mA)}$ to satisfy the compatibility condition,
it remains to prove that $c+d\Z\subseteq T^\circ_1$ if, and only if, $\lambda c+d\Z\subseteq T^\circ_\lambda$, for all $d\geq 1$ and $c\in\{0,\dots,d-1\}$.
Suppose that $c+d\Z\subseteq T^\circ_1$ for some $d\geq 1$ and \blue{suppose that} the cosets of $T^\circ_\lambda$ are cosets of $d'\Z$.
By Proposition~\ref{prop:syntactic-core}, $\lambda$ and $d'$ are coprime. Therefore, there exists $\mu\in\Z$ such that $\lambda\mu=1\bmod d'$.
We have $c+d\mu\Z\subseteq T^\circ_1$,  because $d$ divides $d\mu$.
It follows that $\lambda c+\lambda d\mu\Z\subseteq T^\circ_\lambda$.
Now, let $x\in \lambda c+d\Z$, say $x=\lambda c+qd$.
Then we have $x=\lambda c+q\lambda\mu d\bmod d'$.
Note that $\lambda c+q\lambda\mu d\in \lambda c+\lambda d\mu\Z\subseteq T^\circ_\lambda$.
Since the cosets in $T^\circ_\lambda$ are cosets of $d'\Z$, we obtain that $x\in T^\circ_\lambda$ and consequently that $\lambda c+d\Z\subseteq T^\circ_\lambda$.
Conversely, if $\lambda c+d\Z\subseteq T^\circ_\lambda$ then $\lambda c+d\lambda\Z\subseteq T^\circ_\lambda$, because $d$ divides $d\lambda$. 
Since $\mA$ is a core, 
$x\mapsto \lambda\cdot x$ is a self-embedding of $\mA$, so that $c+d\Z\subseteq T^\circ_1$.

To conclude, the family of compatible relations $\{T_\lambda\}_{\lambda\in\End(\mA)}$ is uniformly pp-definable in $\mA$ and consists of non-Horn relations.
By Proposition~\ref{prop:unary-hard}, we obtain that $\Csp(\mA)$ is NP-hard.
\end{proof}

We illustrate our proofs in some examples below.
\begin{example}
Consider the binary relation \[S=\{(\lambda,t) \mid (\lambda=1\bmod 4\land t=1\bmod 4) \lor (\lambda=3\bmod 4\land t=3\bmod 4) \lor t=0\}.\]
One sees that the set of endomorphisms of $\mA:=(\Z;+,S)$ is equal to $\End(\mA):=1+2\Z$. 
Moreover, $S$ is not Horn-definable.
For every $\lambda\in\End(\mA)$, one has $S_\lambda = \{0\}\cup (\lambda+4\Z)$.
When $\lambda$ is fixed, one can define a finite set by $\exists y(x\in S_\lambda\land y\in S_\lambda\land x+y=\lambda)$, which defines $\{0,\lambda\}$.
One then obtains a reduction from \OneInThree\ by $\exists x,y,z\in\{0,\lambda\}: x+y+z\in\{0,\lambda\}$.
Finally, by existentially quantifying over \blue{$\lambda \in \End(\mA)$ we} obtain a reduction from $\OneInThree$ to $\Csp(\Z;+,S)$.
\end{example}

\begin{example}
	Let $R:=\{0\}\cup (1+3\Z)\cup (2+3\Z)$ and $K=1+3\Z$.
	Note that Proposition~\ref{prop:unary-hard} does not apply to $\Csp(\Z;+,R)$ since $(\Z;+,R)$
	is not a core \blue{(we have $0 \in \End(\Z;+,R)$)}. 
	\blue{Neither} does Corollary~\ref{cor:structure-endos} \blue{apply} to $\Csp(\Z;+,R,K)$ since
	$\End(\Z;+,R,K) = K$, \blue{which is clearly Horn-definable in $(\Z;+,R,K)$}.
	\blue{But one obtains hardness of $\Csp(\Z;+,R,K)$ by Theorem~\ref{thm:not-Horn-hard}. Indeed,}  
	pick $a=0$ (satisfying the linear equation $x=0$ in the definition of $R$) and $b=1$ (satisfying the modular linear equation $x=1\bmod 3$ in the definition of $R$),
	and define the relation $S=\{(\lambda,t)\mid \lambda\in K\land \lambda-t\in S\}$.
	Note that for all $\lambda\in K$, we have $S_\lambda = \{\lambda\}\cup 3\Z\cup (2+3\Z)$.
	The formula $\exists w(w\in K\land S(\lambda,t)\land S(\lambda,t+3w))$ defines the relation
	$T=\{(\lambda,t)\mid \lambda=1\bmod 3 \land (t=0\bmod 3\lor t=2\bmod 3)\}$, which is fully modular and not Horn-definable.
	Proposition~\ref{prop:hardness-fully-modular} implies that $\Csp(\Z;+,R,S)$ is NP-hard.
\end{example}

\section{Tractability} \label{sec:tract}

In this section we show the following. 

\begin{proposition}\label{prop:tract}
Let $\mA$ be a structure with finite relational signature, domain $\mathbb Z$, and whose
relations have quantifier-free Horn definitions over  
$(\mathbb Z;+,1)$.
Then there is an algorithm that solves CSP$(\mA)$ in polynomial time.
\end{proposition}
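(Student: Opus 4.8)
The plan is to decide $\Csp(\mA)$ by a propagation procedure in the spirit of the Horn-SAT algorithm and of the algorithm for the rational case in \cite{HornOrFull}, combined with integer linear algebra. Since $\mA$ is fixed, each of its relations comes with a fixed quantifier-free Horn definition over $(\Z;+,1)$; substituting these into a given primitive positive sentence $\Phi$ produces, in polynomial time, a finite conjunction $\Psi$ of clauses of the form $\neg e_1 \lor \dots \lor \neg e_n \lor e_0$, where $e_1,\dots,e_n$ are linear equations and $e_0$ is a linear or modular linear equation (possibly $\bot$, and possibly $n=0$). Write $m$ for the number of variables and $L$ for the bit-length of $\Psi$; the goal is to decide whether $\Psi$ is satisfiable over $\Z$.

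The central data structure is the solution set over $\Z$ of a conjunction $U$ of linear and modular linear equations. Eliminating each congruence $\sum a_i x_i = c \bmod d$ by a fresh variable $t$ and the equation $\sum a_i x_i - dt = c$ turns $U$ into a linear Diophantine system, whose solvability is decidable and whose solution set is computable in polynomial time, e.g.\ via Hermite normal form \cite{KannanBachem,Schrijver}; if nonempty, this set is a coset $\tuple v + G$ of a subgroup $G \leq \Z^m$, representable by a particular solution $\tuple v$ together with an HNF basis of $G$, with polynomially bounded bit-length. Throughout the algorithm $U$ will be a subconjunction of the finitely many equations occurring in $\Psi$, so these representations stay polynomial-size; moreover, given $\tuple v + G$ and a further equation $e$, one can check in polynomial time whether $e$ is \emph{entailed by $U$}, i.e.\ holds at every point of $\tuple v + G$ (equivalently, whether adding $e$ leaves the coset unchanged). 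The algorithm initialises $U$ to the unit clauses of $\Psi$ and then repeatedly passes over all clauses: for a clause $\neg e_1 \lor \dots \lor \neg e_n \lor e_0$ whose antecedents are all entailed by the current $U$, it adds $e_0$ to $U$ and recomputes the coset; if the coset ever becomes empty (in particular, if $\bot$ is derived) it outputs ``unsatisfiable'', and once a full pass changes nothing it halts and outputs ``satisfiable'' iff the coset is nonempty.

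For the running time it suffices to bound the number of passes that change the coset, since each makes the coset strictly smaller at least once. The affine hulls $\tuple v_i + \Q G_i$ form a non-increasing chain, so the dimension drops at most $m+1$ times; a dimension drop occurs exactly when a non-entailed \emph{linear} equation is added (a linear equation already holding on the affine hull changes nothing, so it is never added). Between two consecutive dimension drops only modular equations get added, each replacing $G$ by a strict subgroup whose index divides the modulus (via the homomorphism $G \to \Z/d\Z$, $x \mapsto \langle a,x\rangle$); hence over such a stretch the total index divides the product of all moduli appearing in $\Psi$, so the stretch has length at most $\log_2(\prod d) \le L$. Altogether there are $O(mL)$ coset changes, and the procedure runs in polynomial time.

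Soundness is immediate: each added $e_0$ is entailed by $U$ exactly when all its antecedents are, so every integer solution of $\Psi$ lying in the current coset stays in it, whence $\mathrm{Sol}(\Psi) \subseteq \mathrm{Sol}(U)$ always, and if $U$ becomes unsatisfiable so is $\Phi$. For completeness, suppose the algorithm halts with a nonempty coset $\tuple v + G$ without deriving $\bot$. Fix a clause: either all its antecedents are entailed, and then so is $e_0$ by the fixpoint property, so the whole coset satisfies the clause; or some antecedent $e_i$ is not entailed, and then — crucially using that Horn antecedents are \emph{non-modular} linear equations — the set of points of $\tuple v + G$ satisfying $e_i$ is either empty or a coset whose $\Q$-affine hull has strictly smaller dimension than that of $\tuple v + G$, and it contains all points of $\tuple v + G$ that violate the clause. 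Since $\tuple v + G$ contains a full-rank sublattice of its affine hull, it is Zariski-dense there and hence not covered by the finitely many proper affine subspaces arising this way, so some $x^\ast \in \tuple v + G$ satisfies every clause, giving $x^\ast \in \mathrm{Sol}(\Phi)$. The main obstacle is controlling the two shrinking phenomena simultaneously — affine-dimension drops coming from linear atoms, and finite-index refinements coming from modular consequents — since the polynomial time bound needs $\sum \log d \le L$ while the final covering argument needs the antecedents to be non-modular; pinning down the right normal form and invariant for $U$ is where the care lies.
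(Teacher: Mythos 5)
Your proposal is correct, and its skeleton --- positive unit-clause propagation with a polynomial-time entailment test, followed by a fixpoint argument that exhibits an integer point avoiding the finitely many non-entailed antecedent equations --- is the same as the paper's, which likewise adapts the algorithm of \cite{HornOrFull}. The technical realization differs in three places. First, the paper disposes of modular atoms once and for all by quantifier introduction: $\sum a_i x_i = b \bmod c$ becomes $\sum a_i x_i = b + ck$ for a fresh integer variable $k$ (this preserves Hornness because modular atoms occur only positively), after which everything is pure linear Horn over $(\Z;+)$ and Theorem~\ref{thm:tract} applies; you instead keep the congruences and absorb them into an explicit lattice-coset representation of the derived unit facts, maintained via Hermite normal form. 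Second, the paper's entailment test (Lemma~\ref{lem:implies}) never computes the integer solution set: it checks $\Z$-feasibility of the unit facts and then tests implication over $\Q$ by rank computations, justified by the observation that the line through a rational counterexample and an integer solution contains infinitely many integer points; your HNF-based test is a correct direct substitute. Third, at the fixpoint the paper constructs an explicit satisfying assignment (eliminate rank-many variables by Gaussian elimination and set the $i$-th remaining variable to $S^i$ for a large $S$), whereas you argue non-constructively that a full-rank lattice coset is not covered by finitely many proper affine subspaces of its affine hull; your version additionally needs, and correctly uses, the fact that Horn antecedents are non-modular, since a congruence cuts out a finite-index subcoset rather than a lower-dimensional subset. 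Your termination and complexity accounting via affine dimension plus group index is sound, though it can be simplified: each equation of $\Psi$ enters $U$ at most once (re-adding an entailed or already present equation changes nothing), so the number of coset-changing steps is bounded by the number of atoms of $\Psi$. What your route buys is a self-contained integer-lattice argument that avoids both the detour through $\Q$ and the explicit witness; what the paper's route buys is lighter machinery (rational linear algebra plus a black-box Diophantine feasibility test) and a constructive satisfying assignment.
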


This result follows from the following more general 
result. 

\begin{theorem}\label{thm:tract}
Let $\phi$ be a quantifier-free Horn formula over $({\mathbb Z};+)$, allowing parameters from ${\mathbb Z}$ represented in binary. Then there exists a polynomial-time algorithm to decide whether $\phi$ is satisfiable over $(\mathbb Z;+)$. 
\end{theorem}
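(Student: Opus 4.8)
\emph{Proof plan.} The plan is to decide satisfiability by a Horn-style unit-propagation procedure over systems of \emph{linear} equations, using integer linear algebra (Hermite or Smith normal form) as the core subroutine. First I would \emph{reify} the modular literals: each modular linear equation $\sum_j a_j x_j \equiv b \pmod{c}$ (with $c \geq 1$) occurring as the positive literal of a clause is replaced by the linear equation $\sum_j a_j x_j - c t = b$ in a fresh variable $t$. Since there is at most one fresh variable per clause, the number $N$ of variables stays polynomial, and every clause of $\phi$ now has the shape $\bigvee_{i=1}^n \lnot\phi_i \lor \psi_0$, where $\psi_0$ (if present) is a single linear equation over the extended variables and each $\phi_i\colon \sum_j a^i_j x_j = b_i$ mentions only the original variables. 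Let $m$ be the number of clauses. I would maintain a system $\Sigma$ of linear equations over $\Z$, initialised with the equations coming from the clauses with $n=0$, keeping the invariant that the solution set of $\Sigma$ contains every solution of $\phi$; all clauses with $n \geq 1$ start unmarked.

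The main loop would run as follows. Using Hermite normal form one tests in polynomial time, with polynomially bounded bit-length, whether $\Sigma$ has an integer solution and, if so, computes a particular solution $v_0$ and a $\Z$-basis $w^{(1)},\dots,w^{(r)}$ of the subgroup $W$ of solutions of the homogeneous part \cite{KannanBachem,Schrijver}. If $\Sigma$ is unsatisfiable, report that $\phi$ is unsatisfiable. Otherwise, for each unmarked clause $C = \bigvee_i \lnot\phi_i \lor \psi_0$, test whether every $\phi_i$ is implied by $\Sigma$, i.e.\ whether $\phi_i(v_0)=b_i$ and $\phi_i(w^{(k)})=0$ for all $k$; this is plain integer arithmetic on data already computed. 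If some such $C$ passes, add $\psi_0$ to $\Sigma$ (and, if $C$ has no positive literal, report unsatisfiability), mark $C$, and restart; if none passes, report that $\phi$ is satisfiable. Enlarging $\Sigma$ only shrinks its solution set, and a linear equation implied by $\Sigma$ stays implied after $\Sigma$ is enlarged; hence once a clause can be marked it remains markable, each of the $\leq m$ clauses is marked at most once, there are $O(m)$ restarts, and each restart does one normal-form computation plus $O(m)$ implication tests on polynomially sized integers. So the procedure runs in polynomial time.

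Soundness is an easy induction: $\Sigma$ only ever contains equations entailed by $\phi$, since a unit clause is such an equation and whenever $\psi_0$ is added it is because $\phi$ entails all of $\phi_1,\dots,\phi_n$, hence by the clause $C$ also $\psi_0$; thus an unsatisfiable $\Sigma$ certifies an unsatisfiable $\phi$. For completeness, suppose the loop halts reporting satisfiability; let $V = v_0 + W$ be the nonempty solution set of the final $\Sigma$, with $\mathrm{rank}(W)=r$. Every marked clause holds on all of $V$ because its $\psi_0$ lies in $\Sigma$. For an unmarked clause $C = \bigvee_i \lnot\phi_i \lor \psi_0$, some $\phi_i$ is not implied by the final $\Sigma$; since $\phi_i$ is a single linear equation, $\phi_i|_W\colon W \to \Z$ is a nonzero homomorphism (otherwise $\phi_i$ is constant on $V$, hence either implied by $\Sigma$ or unsatisfiable on $V$, and in the latter case the bad set is empty), so $\{x \in V : \phi_i(x)\}$ is empty or a coset of the rank-$(r-1)$ subgroup $\ker(\phi_i|_W)$. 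Thus the ``bad set'' $B_C := \{x \in V : \phi_1(x) \wedge \cdots \wedge \phi_n(x) \wedge \lnot\psi_0(x)\}$ is empty or contained in a coset of a subgroup of $W$ of rank $< r$. Now I would invoke the geometric fact that a coset of a rank-$r$ subgroup of $\Z^N$ is not covered by finitely many cosets of subgroups of smaller rank: passing to $W \otimes \Q \cong \Q^r$, each such coset lies in a proper affine subspace, while the points $(t, t^2, \dots, t^r)$ with $t \in \Z$ lie in $\Z^r$, and no proper affine hyperplane contains more than $r$ of them, so no finite union of proper affine subspaces covers a coset of $W$. Hence $V \setminus \bigcup_{C \text{ unmarked}} B_C \neq \emptyset$; any point $a$ in it satisfies the unit clauses and marked clauses (being in $V$) and, lying outside every $B_C$, also every unmarked clause. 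Undoing the reification turns $a$ into a solution of the original $\phi$.

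The step I expect to be the main obstacle is the completeness argument, and in particular pinpointing why the Horn shape is exactly what makes it work: it is essential that the negated literals are ordinary linear equations, so that every not-yet-fired clause has a \emph{rank-deficient} bad set; a negated modular equation could instead contribute a bad set that is cofinite in a coset, and finitely many such sets \emph{can} cover a coset — which is precisely how non-Horn instances become hard. The remaining ingredients — the reification of modular equations and the verification that all the Hermite-normal-form computations stay within polynomial bit-length — are routine and covered by the cited literature.
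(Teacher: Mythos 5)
Your proposal is correct, and its top-level algorithm coincides with the paper's: positive unit-clause resolution, where a clause fires its positive literal as soon as all of its negated linear equations are implied (over $\Z$) by the currently derived system, and where failure of the accumulated system or an all-negative clause firing yields unsatisfiability. The differences lie in how the two supporting facts are established. For the implication oracle, the paper's Lemma~\ref{lem:implies} first tests integer feasibility and then tests implication over $\Q$ by a rank computation, justified by the observation that the line through a rational counterexample and an integer solution contains infinitely many integer points; you instead compute, via Hermite normal form, a particular integer solution $v_0$ and a lattice basis of the homogeneous integer solutions and check that the affine form vanishes on $v_0$ and the basis --- an equivalent, equally polynomial test. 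For completeness at saturation, the paper follows \cite{HornOrFull}: eliminate rank-many variables by Gaussian elimination and assign powers $S^i$ of a large constant to the remaining variables so that every surviving disequality is violated; you argue instead that the solution set of the final system is a coset of a rank-$r$ lattice, that the violation set of every unfired clause is empty or lies in a coset of a rank-$(r-1)$ sublattice (this is exactly where the Horn shape --- negated literals being genuine linear equations --- is used, as you correctly emphasize), and that finitely many such cosets cannot cover the big coset, via the moment-curve argument. Your variant is somewhat more careful about integrality, since the lattice parametrization guarantees the constructed witness is integral, whereas the paper's explicit powers-of-$S$ assignment is made after a rational elimination; the price is relying on Hermite-normal-form computations with polynomial bit-size bounds, which is standard and which you cite. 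Finally, your reification of modular positive literals by fresh variables is the same ``quantifier introduction'' that the paper performs when deriving Proposition~\ref{prop:tract} from Theorem~\ref{thm:tract}, so it is consistent with the intended reading of the statement rather than an extra assumption.
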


The proof of Theorem~\ref{thm:tract} can be found
at the end of this section. We first show how to derive Proposition~\ref{prop:tract}. 

\begin{proof}[Proof of Proposition~\ref{prop:tract}]
The input of CSP$(\mA)$ consists of a primitive positive sentence whose atomic formulas
are of the form $R(x_1,\dots,x_k)$ where $R$ is quantifier-free Horn definable over $\mathcal{L}_{(\mathbb Z;+,1)}$. Since $\sum_{i=1}^n a_ix_i=b\bmod c$ is equivalent to  $\sum_{i=1}^n a_ix_i=b+ ck$, where $k$ is a new integer variable, we can as well assume that the input
to our problem consists of a set of Horn clauses over $(\mathbb Z;+,1)$. This is tacitly the process of quantifier introduction, the converse of quantifier elimination. Then apply Theorem~\ref{thm:tract}. 
\end{proof} 

Our algorithm for the proof of Theorem~\ref{thm:tract} uses two other well-known algorithms:
\begin{enumerate}
\item a polynomial-time algorithm for satisfiability of linear diophantine equations, i.e.,
the subproblem of the computational problem from
Theorem~\ref{thm:tract} where the input only contains atomic formulas (see, e.g.,~\cite{Schrijver}).  
\item a polynomial-time algorithm to compute
the rank of a matrix over ${\mathbb Q}$; this allows us in particular to decide whether a given linear system of equalities implies another equality over the rationals (this is standard, using Gaussian elimination; again, see~\cite{Schrijver} for a discussion of the complexity). 
\end{enumerate}

These two algorithms can be combined to obtain the following.

\begin{lemma}\label{lem:implies}
There is a polynomial-time algorithm that decides
whether a given system $\Phi$ of linear diophantine equations implies another given diophantine equation $\psi$ 
over ${\mathbb Z}$.  
\end{lemma}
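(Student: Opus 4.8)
The goal is to decide, in polynomial time, whether a system $\Phi$ of linear Diophantine equations over $\Z$ implies a single further equation $\psi$. The plan is to reduce the problem to the two algorithmic primitives already available: deciding solvability of a system of linear Diophantine equations over $\Z$, and computing ranks over $\Q$ (equivalently, deciding whether a linear system implies an equality \emph{over $\Q$}). The key observation is the standard one: $\Phi$ implies $\psi$ over $\Z$ if, and only if, the system $\Phi \wedge \neg\psi$ is unsatisfiable over $\Z$. Since $\psi$ is a single equation $\sum_i a_i x_i = b$, its negation $\sum_i a_i x_i \neq b$ is not itself an equation, so we cannot feed $\Phi \wedge \neg\psi$ directly to primitive (1); instead we must argue about how a Diophantine system can fail to be implied by $\Phi$.

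First I would handle the case where $\Phi$ itself is unsatisfiable over $\Z$: this is detected by primitive (1), and in that case $\Phi$ vacuously implies $\psi$, so we answer ``yes''. So assume $\Phi$ is satisfiable, with solution set a coset $\tuple v_0 + G$ of a subgroup $G \le \Z^n$ (the homogeneous solutions). Now $\Phi$ implies $\psi$ over $\Z$ exactly when the affine-linear form $\ell(\tuple x) := \sum_i a_i x_i - b$ vanishes on all of $\tuple v_0 + G$. Since $\ell$ is affine and $\tuple v_0 + G$ is a coset, $\ell$ is constant on each coset direction only if the linear part of $\ell$ annihilates $G$; so the condition splits into: (a) $\sum_i a_i x_i$ is constant on $\tuple v_0 + G$, i.e.\ the linear functional $\tuple a$ vanishes on the $\Q$-span of $G$, and (b) that constant value equals $b$.

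Condition (a) is precisely a question of whether the homogeneous system underlying $\Phi$ implies the homogeneous equation $\sum_i a_i x_i = 0$ \emph{over $\Q$} — which is decidable in polynomial time by primitive (2) (a rank computation: check whether adjoining the row $\tuple a$ to the coefficient matrix of the homogeneous part of $\Phi$ leaves the rank unchanged). If (a) fails, output ``no''. If (a) holds, then $\sum_i a_i x_i$ takes a single value $b'$ on the solution set of $\Phi$; to find $b'$ and check $b' = b$, I would run primitive (1) on the augmented system $\Phi \wedge (\sum_i a_i x_i = b)$: if it is solvable over $\Z$ then necessarily $b' = b$ (since every solution of $\Phi$ already has $\sum_i a_i x_i = b'$, solvability forces $b' = b$) and we output ``yes''; if it is unsolvable then $b' \neq b$ and we output ``no''. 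All steps are polynomial-time, and the binary representation of the parameters is handled by the cited algorithms.

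**Main obstacle.** The only subtle point is the justification that $\Phi \Rightarrow \psi$ over $\Z$ is equivalent to the conjunction of the \emph{rational} implication (a) and the single \emph{integral} feasibility check in (b); in particular one must check that no purely $\Z$-specific (congruence) obstruction can make $\psi$ hold on all integer solutions of $\Phi$ while failing over $\Q$ on the span of $G$. This is true because $G$ is a \emph{full-rank} lattice inside its own $\Q$-span (it is generated over $\Z$ by vectors spanning that subspace), so a linear functional vanishing on $G$ vanishes on $\mathrm{span}_\Q(G)$; hence there is no gap between the integral and rational versions of condition (a). Making this lattice-span argument precise — and noting that a basis of $G$, or at least a generating set, is produced by the Hermite-normal-form based algorithm behind primitive (1) — is the one place where care is needed; everything else is routine bookkeeping.
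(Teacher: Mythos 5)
Your proposal is correct, and its overall strategy (dispose of the vacuously true case by one integer-feasibility test, then reduce the rest to rational linear algebra) matches the paper's; but the way you bridge the gap between $\Q$ and $\Z$ is different. The paper proves the single clean claim that, once $\Phi$ is satisfiable over $\Z$, ``$\Phi$ implies $\psi$ over $\Q$'' and ``$\Phi$ implies $\psi$ over $\Z$'' coincide: given a rational solution $\alpha$ violating $\psi$ and an integer solution $\beta$, the line through $\alpha$ and $\beta$ lies in the solution set of $\Phi$, carries infinitely many integer points, and $\psi$ cannot hold at infinitely many of them without holding at $\alpha$; so one rational-implication test (Gaussian elimination) finishes the job. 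You instead split $\psi$ into its homogeneous part, tested by a rank computation against the homogeneous part of $\Phi$, and its constant part, tested by a second integer-feasibility call on $\Phi\wedge\psi$. This is sound: your condition (a)/(b) decomposition of ``$\ell$ vanishes on the coset $v_0+G$'' is exactly right, and the fact you isolate as the main obstacle --- that the integer kernel $G$ spans the rational kernel of the homogeneous system, so a functional nonvanishing on the rational kernel already fails on some integer direction (clear denominators and translate $v_0$ along it) --- is precisely the same denominator-clearing phenomenon that powers the paper's line argument, just phrased through the lattice/coset structure rather than through a line between two points. The trade-off is minor: the paper uses one call to each primitive, while you use two integer-feasibility calls plus a rank check (still clearly polynomial); also, your closing remark that one needs a generating set of $G$ from the Hermite-normal-form algorithm is unnecessary --- your algorithm never has to compute $G$, since the rank test and the existence (not computation) of an integer direction with $a\cdot g\neq 0$ suffice.
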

\begin{proof}
First, use the first algorithm above to test whether $\Phi$ has a solution over ${\mathbb Z}$. If no, return yes (false implies everything).
If yes, we claim that $\Phi$ implies $\psi$ 
over ${\mathbb Q}$ (which can be tested by the second algorithm above) if and only if $\Phi$ implies $\psi$ over the integers. Clearly, 
if every rational solution of $\Phi$ satisfies $\psi$,
then so does every integer solution. Suppose now that there exists a rational solution $\alpha$ to $\Phi$ which does not satisfy $\psi$. Also take an integer solution $\beta$ to $\Phi$. \blue{Then on the line
$L$ that goes through $\alpha$ and $\beta$ there are infinitely many integer points. 
If infinitely many points on a line satisfy $\psi$, then
all points of the line must satisfy $\psi$. Since $\alpha \in L$ does not satisfy $\psi$ it follows that 
an integer point on $L$ does not satisfy $\psi$, i.e., 
 $\Phi$ does not imply $\psi$ over the integers.}
\end{proof}

\ignore{
The first polynomial-time algorithms for the satisfiability of linear diophantine equation systems have been discovered by Frumkin and, independently Sieveking and von zur Gathen. 
Kannan and Bachem~\cite{KannanBachem} presented a method based on first computing the Hermite Normal Form 
of the matrix given by the linear system.
See discussion in the text-book Schrijver~\cite{Schrijver}. 
Further improvements have been made by  Chou and Collins~\cite{ChouCollins}, and Storjohann (1998). The best known algorithm has 
space complexity $O(n^2 \log M)$ 
(essentially the same size as the input)
and worst-case running time in $O(n^5 polylog(n,M))$ (Micciancio and Warinschi'2001). 
}

Given the two mentioned algorithms, 
our procedure for the proof of Theorem~\ref{thm:tract} is basically an implementation of positive unit clause resolution. It takes the same form as the algorithm presented in~\cite{HornOrFull} for satisfiability over the rationals. 

\begin{figure}[h]
\begin{center}
\small
\fbox{
\begin{tabular}{l}
{\rm // Input: a set of Horn-clauses $\mathcal C$
 over $({\mathbb Z};+)$ with parameters.} \\
{\rm // Output: \emph{satisfiable} if $\mathcal C$ is satisfiable in $({\mathbb Z};+)$, \emph{unsatisfiable} otherwise} \\
Let $\cal U$ be clauses from $\cal C$ that only contain a single positive literal. \\
If $\cal U$ is unsatisfiable then return \emph{unsatisfiable}. \\
Do \\
\hspace{.5cm}    For all negative literals $\neg \phi$ in clauses from $\cal C$ \\
\hspace{1cm}    If $\cal U$ implies $\phi$, then delete the negative literal $\neg \phi$ from all clauses in $\cal C$. \\
\hspace{.5cm}    If $\cal C$ contains an empty clause, then return \emph{unsatisfiable}. \\
\hspace{.5cm}    If $\cal C$ contains a clause with a single positive literal $\psi$, then add $\{\psi\}$ to $\cal U$. \\
Loop until no literal has been deleted \\
Return \emph{satisfiable}.
\end{tabular}}
\end{center}
\caption{An algorithm for satisfiability of Horn formulas with parameters over $({\mathbb Z};+)$.}
\label{fig:alg}
\end{figure}

\begin{proof}
We follow the proof of Proposition 3.1 from \cite{HornOrFull}. We first discuss the correctness of the algorithm.

When $\cal U$ logically implies $\phi$ (which can be tested with the algorithm from Lemma~\ref{lem:implies}) 
then the negative literal $\neg \phi$ is never satisfied and can be deleted from all clauses without affecting the set of solutions.
Since this is the only way in which literals can be deleted from clauses, it is clear that if one clause becomes empty the instance
is unsatisfiable.

If the algorithm terminates with \emph{satisfiable}, then no negation of an inequality
is implied by $\cal U$. If $r$ is the rank of the linear equation system defined by $\cal U$,
we can use Gaussian elimination to eliminate $r$ of the variables from all literals in the remaining clauses.
For each of the remaining inequalities, consider the
sum of absolute values of all coefficients.
Let $S$ be one plus the maximum of this sum
over all the remaining inequalities. Then setting the $i$-th
variable to $S^i$ satisfies all clauses.
To see this, take any inequality, and assume that $i$ is the highest variable
index in this inequality. Order the inequality in such a way that the variable
with highest index is on one side and all other variables on the other side of the
$\neq$ sign. The absolute value on the side with the $i$-th variable is at
least $S^i$. The absolute value on the other side is less than $S^i - S$, since all
variables have absolute value less than $S^{i-1}$ and the sum of all
coefficients is less than $S-1$ in absolute value. Hence, both sides of the
inequality have different absolute value, and the inequality is satisfied.
Since all remaining clauses have at least one inequality, all
constraints are satisfied.

Now let us address the complexity of the algorithm.
With appropriate data structures, the time needed for removing negated literals $\neg \phi$
from all clauses when $\phi$ is implied by $\cal U$ is linearly bounded in the input size since each literal can be removed at most once.
\end{proof}

\section{Conclusion}
We are finally in position to prove 
the main result. 

\ignore{\begin{proof}[Proof of Theorem~\ref{thm:main}]
Let $\mA$ be a finite-signature first-order reduct of $({\mathbb Z};+,1)$ that contains $+$ and is a core. 
Membership in NP of CSP$(\mA)$ has been noted in the introduction. 
If some of the relations of $\mA$ is not definable by Horn formulas, then 
NP-hardness follows from Theorem~\ref{thm:not-Horn-hard}. Otherwise, polynomial-time tractability follows from Proposition~\ref{prop:tract}.
\end{proof}}

\begin{proof}[Proof of Theorem~\ref{thm:dichotomy}]
\blue{Let $\mA$ be a finite-signature first-order reduct of $({\mathbb Z};+,1)$ that $\mA$ contains $+$.} 
By Lemma~\ref{lem:core} there exists a core 
$\mB$ of $\mA$. If $\mB$ has only one element then
$\Csp(\mB)$ and $\Csp(\mA)$ are trivially in P. 
Otherwise, $\mB$ is itself first-order definable in
$({\mathbb Z};+,1)$ and contains $+$ \blue{by Lemma~\ref{lem:core}}, 
and 
the statement follows from Theorem~\ref{thm:not-Horn-hard} and Proposition~\ref{prop:tract}. 
\end{proof}

\bibliography{local,local2}
\end{document}